\title{\huge{On the Delay-Storage Trade-off in Content Download from Coded Distributed Storage Systems}}
\author{Gauri Joshi,~
	 Yanpei Liu,~\IEEEmembership{Student Member,~IEEE,}
  	Emina Soljanin,~\IEEEmembership{Senior Member,~IEEE} \thanks{This
    work was presented in part at the 50th Annual Allerton Conference on
	Communication, Control and Computing, Monticello IL, October 
    2012.}  \thanks{G.~Joshi is with the Department of Electrical Engineering and Computer Science, Massachusetts Institute of Technology, Cambridge, MA. Y.~Liu is with the Department of Electrical and
    Computer Engineering, University of Wisconsin Madison, Madison, WI. E.~Soljanin is with Bell Labs, Alcatel-Lucent, Murray Hill, NJ.
    (E-mail: gauri@mit.edu, yliu73@wisc.edu, emina@research.bell-labs.com)}}
\newtheorem{defn}{Definition}
\newtheorem{rem}{Remark}
\newtheorem{lem}{Lemma}
\newtheorem{thm}{Theorem}
\newtheorem{corr}{Corollary}
\newcommand{\expec}{\text{E}}
\newcommand{\var}{\text{V}}
\begin{document}
\maketitle

\begin{abstract}
In this paper we study how coding in distributed storage reduces expected download time, in addition to providing reliability against disk failures. The expected download time is reduced because when a content file is encoded to add redundancy and distributed across multiple disks, reading only a subset of the disks is sufficient to reconstruct the content. For the same total storage used, coding exploits the diversity in storage better than simple replication, and hence gives faster download. We use a novel fork-join queuing framework to model multiple users requesting the content simultaneously, and derive bounds on the expected download time. Our system model and results are a novel generalization of the fork-join system that is studied in queueing theory literature. Our results demonstrate the fundamental trade-off between the expected download time and the amount of storage space. This trade-off can be used for design of the amount of redundancy required to meet the delay constraints on content delivery. 

\end{abstract}

\begin{keywords}
distributed storage, fork-join queues, MDS codes 
\end{keywords}

\section{Introduction
\label{sec:intro}}

Large-scale cloud storage and distributed file systems such as Amazon Elastic Block Store (EBS) \cite{EBS} and Google File System
(GoogleFS) \cite{GFS} have become the backbone of many applications such as web searching, e-commerce, and cluster computing. In these distributed storage systems, the content files stored on a set of disks may be simultaneously requested by multiple users. The users have two major demands: reliable storage and fast content download. Content download time includes the time taken for a user to compete with the other users for access to the disks, and the time to acquire the data from the disks. Fast content download is important for delay-sensitive applications such as video streaming, VoIP, as well as collaborative tools like Dropbox \cite{Dropbox} and Google Docs \cite{Googledoc}.

The authors in \cite{GFS} point out that in large-scale distributed storage systems, disk failures are the norm and not the exception. To protect the data from disk failures, cloud storage providers today simply replicate content throughout the storage network over multiple disks. In addition to fault tolerance, replication makes the content quickly accessible since multiple users requesting a content can be directed to different replicas. However, replication consumes a large amount of storage space. In data centers that process massive data today, using more storage space implies higher expenditure on electricity, maintenance and repair, as well as the cost of leasing physical space.

Coding, which was originally developed for reliable communication in presence of noise, offers a more efficient way to store data in distributed systems. The main idea behind coding is to add redundancy so that a content, stored on a set of disks, can be reconstructed by reading a subset of these disks. Previous work shows that coding can achieve the same reliability against failures with lower storage space used. It also allows efficient replacement of disks that have to be removed due to failure or maintenance. We show that in addition to reliability and easy repair, coding also gives faster content download because we only have to wait for content download from a subset of the disks. Some preliminary results on the analysis of download time via queueing-theoretic modeling are presented in \cite{gauri_yanpei_emina_allerton}.
%
 
%
\subsection{Previous Work}

Research in coding for distributed storage was galvanized by the results reported in \cite{dimakis07}. Prior to that work, literature on distributed storage recognized that, when compared with replication, coding can offer huge storage savings for the same reliability levels. But it was also argued that the benefits of coding are limited, and are outweighed by certain disadvantages and extra complexity. Namely, to provide reliability in multi-disk storage systems, when some disks fail, it must be possible to restore either the exact lost data or an equivalent reliability with minimal download from the remaining storage. This problem of efficient recovery from disk failures was addressed in some early work \cite{even_odd}. But in general, the cost of repair regeneration was considered much higher in coded than in replication systems \cite{liskov}, until \cite{dimakis07} established existence and advantages of new regenerating codes. This work was then quickly followed, and the area is very active today (see e.g., \cite{kumar, kumar2, zigzag_codes} and references therein).



Only recently \cite{ulric_muriel_emina, berk_isit, mds_queue} was it realized that in addition to reliability, coding can guarantee the same level of content accessibility, but with lower storage than replication. In \cite{ulric_muriel_emina}, the scenario that when there are multiple requests, all except one of them are blocked and the accessibility is measured in terms of blocking probability is considered. In \cite{berk_isit}, multiple requests are placed in a queue instead of blocking and the authors propose a scheduling scheme to map requests to servers (or disks) to minimize the waiting time. In \cite{mds_queue}, the authors give a combinatorial proof that flooding requests to all disks, instead of a subset of them gives the fastest download time. This result corroborates the system model we consider in this paper to model the distributed storage system and analyze its download time.


Using redundancy in coding for delay reduction has also been studied in the context of packet transmission in \cite{kabatiansky_krouk_semenov, Maxemchuk93, mia}, and for some content retrieval scenarios in \cite{lihao_thesis, Allerton10e}. Although they share some common spirit, they do not consider the effect of queueing of requests in coded distributed storage systems.

%
\subsection{Our Contributions}

In this paper we show that coding allows fast content download in addition to reliable storage. Since multiple users can simultaneously request the content, the download time includes the time to wait for access to the disks plus the time to read the data. When the content is coded and distributed on multiple disks, it is sufficient to read it only from a subset of these disks in order to retrieve the content. We take a queuing-theoretic approach to study how coding the content in this way provides diversity in storage, and achieve a significant reduction in the download time. The analysis of download time leads us to an interesting trade-off between download time and storage space, which can be used to design the optimal level of redundancy in a distributed storage system. To the best of our knowledge, we are the first to propose the $(n,k)$ fork-join system and find bounds on its mean response time, a novel generalization of the $(n,n)$ fork-join system studied in queueing theory literature.

We consider that requests entering the system are assigned to multiple disks, where they enter local queues waiting for disks access. Note that this is in contrast to some existing works (e.g.~\cite{berk_isit}, \cite{mds_queue}) where requests wait in a centralized queue when all disks are busy. Our approach of immediate dispatching of requests to local queues is used by most server farms to facilitate fast acknowledgement response to customers \cite{mor_book}. Under this queueing model, we propose the $(n,k)$ fork-join system, where each request is forked to $n$ disks that store the coded content, and it exits the system when any $k$ ($k \leq n$) disks are read. The $(n,n)$ fork-join system in which all $n$ disks have to be read has been extensively studied in queueing theory and operations research related literature \cite{kim_agarwal, nelson_tantawi, varki_merc_chen}. Our analysis of download time can be seen as a generalization to the analysis of the $(n,n)$ fork-join system.

The rest of the paper is organized as follows. In Section~\ref{sec:main_idea}, we present some preliminary concepts that are central to the results presented in the paper. In Section~\ref{sec:n_k_fork_join}, we analyze the expected download time of the $(n,k)$ fork-join system and present the fundamental trade-off between expected download time and storage. These results were presented in part in \cite{gauri_yanpei_emina_allerton}. In Section~\ref{sec:extensions}, we relax some simplifying assumptions, and present the delay-storage trade-off by considering some practical issues such as heavy-tailed and correlated service times of the disks. In Section~\ref{sec:m_n_k_fork_join}, we extend the analysis to distributed storage systems with a large number of disks. Such systems can be divided into groups of $n$ disks each, where each group is an independent $(n,k)$ fork-join system. Finally, Section~\ref{sec:conclu} concludes the paper and gives future research directions.


\section{Preliminary Concepts}
\label{sec:main_idea}

\subsection{Reducing Delay using Coding}
One natural way to reduce the download time is to replicate the content across $n$ disks. Then if the user issues $n$ download requests, one to each disk, it only needs to wait for the one of the requests to be served. This strategy gives a sharp reduction in download time, but at the cost of $n$ times more storage space and the cost of processing multiple requests.

It is more efficient to use coding instead of replication. Consider that a content $F$ of unit size is divided into $k$ blocks of equal size. It is encoded to $n \geq k$ blocks using an $(n,k)$ maximum distance separable (MDS) code, and the coded blocks are stored on an array of $n$ disks. MDS codes have the property that any $k$ out of the $n$ blocks are sufficient to reconstruct the entire file. MDS codes have been suggested to provide reliability against disk failures. In this paper we show that, in addition to error-correction, we can exploit these codes to reduce the download time of the content.

The encoded blocks are stored on $n$ different disks (one block per disk). Each incoming request is sent to all $n$ disks, and the content can be recovered when any $k$ out of $n$ blocks are successfully downloaded. An illustrative example with $n=3$ disks and $k=2$ is shown in Fig.~\ref{fig:sys_model_example}. The content $F$ is split into equal blocks $a$ and $b$, and stored on $3$ disks as $a$, $b$, and $a \oplus b$, the exclusive-or of blocks $a$ and $b$. Thus each disk stores content of half the size of file $F$. Downloads from any $2$ disks jointly enable reconstruction of $F$.
\begin{figure}[t]
\begin{center}
\begin{tikzpicture} [scale=0.6]
\def\myangle{180};
\def\dr{25pt}
\def\hr{3pt}
\def\tr{5pt}
\coordinate (A) at (-5,2);
\draw [black!30, - , fill] (A) -- ($(A)+(\dr,0)$)  arc (0:360:\dr) -- cycle;
\draw [white, fill] (A) circle (\tr);
\draw [black!50] (A) circle (\hr);
\node at ($(A)+(\dr,0)$)(I) {};
\node at ($(A)+(2.75*\dr,0)$)(IF) {};
\foreach \n in {0,2,4} {
\coordinate (A) at (0,\n);
\node at ($(A)-(\dr,0)$)(F-\n) {};
\node at ($(A)+(\dr,0)$)(J-\n) {};
\draw [black!30, - , fill] (A) -- ($(A)+(\dr,0)$)  arc (0:360:\dr) -- cycle;
\draw [red!50, - , fill] (A) -- ($(A)+(\dr,0)$)  arc (0:\myangle:\dr) -- cycle;
\draw [white, fill] (A) circle (\tr);
\draw [black!50] (A) circle (\hr);
}
\path (I) edge[->, gray, very thick] (IF);
\path (IF) edge[->, bend right, gray, very thick] (F-0);
\path (IF) edge[->, bend left, gray, very thick] (F-4);
\path (IF) edge[->, gray, very thick] (F-2);
\draw ($(0,0)+(0,1.1*\tr)$) node [above] {{$a+b$}};
\draw ($(0,2)+(0,1.1*\tr)$) node [above] {{$b$}};
\draw ($(0,4)+(0,1.1*\tr)$) node [above] {{$a$}};
\coordinate (A) at (5,2);
\draw [red!50, - , fill] (A) -- ($(A)+(\dr,0)$)  arc (0:360:\dr) -- cycle;
\draw [black] (A) -- ($(A)+(\dr,0)$);
\draw [black] (A) -- ($(A)-(\dr,0)$);
\draw [white, fill] (A) circle (\tr);
\draw [black!50] (A) circle (\hr);
\node[text width=0.15cm, text height=0.5cm] at ($(A)-(2.75*\dr,0)$)(O) {};
\node[text width=0.2cm,text height=0.75cm] at ($(A)-(3*\dr,0)$)(OL) {};
\node at ($(A)-(0.9\dr,0)$)(OJ) {};
\path (OJ) edge[<-*, bend right, red!50,very thick] (OL);
\path (OJ) edge[<-*, bend left, red!50,very thick] (OL);
\path (J-0) edge[-o, bend right, red!50, very thick] (O.south west);
\path (J-4) edge[-o, bend left,very thick,gray,dashed] (O.north west);
\path (J-2) edge[-o, red!50,very thick] (O);
\draw ($(5,2)+(0,1.1*\tr)$) node [above] {{$a+b$}};
\draw ($(5,2)-(0,1.1*\tr)$) node [below] {{$b$}};
\end{tikzpicture}
\end{center}
\caption{ Storage is $50\%$ higher, but response time (per disk \& overall) is reduced. \label{fig:sys_model_example}}
\end{figure}
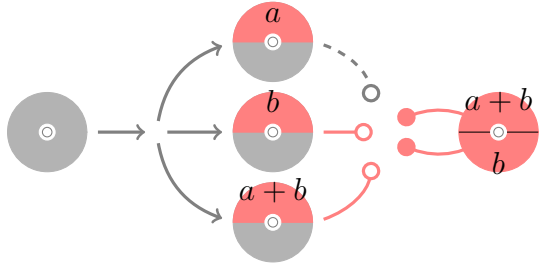

\subsection{Role of Order Statistics}
The time taken to download a block of content is a random variable. If the block download times are independent and identically distributed (i.i.d.), the time to download any $k$ out of $n$ blocks is the $k^{th}$ order statistic of the block download times. We now provide some background on order statistics of i.i.d.\ random variables. For a more complete treatment, please refer to \cite{order_stat}. Although in our system model the block download times are not i.i.d., this background on i.i.d.\ order statistics is a powerful tool for our analysis on the dependent case as shown in later sections.

Let $X_1, \,\, X_2, \,\, \cdots \, ,X_n$ be i.i.d.\ random variables. Then, $X_{k,n}$, the $k^{th}$ order statistic of $X_i$ , $ 1 \leq i \leq n$ , or the $k^{th}$ smallest variable has the distribution,
\begin{align*}
f_{X_{k,n}}(x) &= n \binom{n-1}{k-1} F_{X}(x)^{k-1}(1\!-\! F_{X}(x))^{n-k}f_{X}(x),
\end{align*}
where $f_X$ is the probability density function (PDF) and $F_X$ is the cumulative distribution function (CDF) of $X_i$ for
all $i$. In particular, if $X_i$'s are exponential with mean $1/\mu$, then the expectation and variance of order statistic $X_{k,n}$ are given by,
\begin{align}
\label{eq.orderstat_mean}
\expec[X_{k,n}] &= \frac{1}{\mu}\sum_{i=1}^{k} {\frac{1}{n-k+i}} = \frac{1}{\mu} (H_n - H_{n-k}), \\
\label{eq.orderstat_var}
\var[X_{k,n}] &= \frac{1}{\mu^2} \! \sum_{i=1}^{k} \! {\frac{1}{(n \!-\! k \!+\! i)^2}} \! = \! \frac{1}{\mu^2} (H_{n^2} \!-\! H_{(n-k)^2}),
\end{align}
where $H_n$ and $H_{n^2}$ are generalized harmonic numbers defined by
\begin{align}
\label{eq.harmonic_no}
H_n =  \sum_{j=1}^n \frac{1}{j} ~~ \text{and} ~~ H_{n^2} = \sum_{j=1}^n \frac{1}{j^2}.
\end{align}

We observe from \eqref{eq.orderstat_mean} that for fixed $n$, $\expec[X_{k,n}]$ decreases when $k$ becomes smaller. This fact will help us understand the analysis of download time in Section~\ref{sec:n_k_fork_join} and Section~\ref{sec:m_n_k_fork_join} respectively.

\subsection{Assignment Policies}
\label{subsec:power_of_d_intro}
In our distributed storage model, we divide the content into $k$ blocks, we use $1/k$ units of space of each disk, and hence total storage space used is $n/k$ units. This is unlike conventional replication-based storage solutions where $n$ {\em entire} copies of content are stored on the $n$ disks. In such systems, each incoming request can be assigned to any of the $n$ disks. One such assignment policies is the power-of-$d$ assignment \cite{powerof2, mor_book}. For each incoming request, the power-of-$d$ job assignment uniformly selects $d$ nodes ($d \leq n$) and sends the request to the node with least work left among the $d$ nodes. The amount of work left of a node can be the expected time taken for that node to become empty when there are no new arrivals or simply the number of jobs queued. When $d = n$, power-of-$d$ reduces to the least-work-left (LWL) policy (or joint-the-shortest-queue (JSQ) if work is measured by the number of jobs). Power-of-$d$ assignment has received much attention recently due to the prevailing popularity in large-scale parallel computing. In Section~\ref{sec:n_k_fork_join} and Section~\ref{sec:m_n_k_fork_join}, we compare these policies with our proposed distributed storage model. 

\section{The $(n,k)$ Fork-join System}
\label{sec:n_k_fork_join}

We consider the scenario that users attempt to download the content from the distributed storage system where their requests are placed in a queue at each disk. In Section~\ref{subsec:sys_model_n_k} we propose the $(n,k)$ fork-join system to model the queueing of download requests, and derive theoretical bounds on the expected download time in Section~\ref{subsec:bounds}. This analysis leads us to the fundamental trade-off between download time and storage, which provides insights into the practical system design. Numerical and simulation results demonstrating this trade-off are presented in Section~\ref{subsec:num_results_n_k}.

\subsection{System Model}
\label{subsec:sys_model_n_k}
We model the queueing of download requests at the disks using the $(n,k)$ fork-join system which is defined as follows.
\begin{defn}[$(n,k)$ fork-join system]
\label{defn:n_k_fork_join}
An $(n,k)$ fork-join system consists of $n$ nodes. Every arriving job is divided into $n$ tasks which enter first-come first-serve queues at each of the $n$ nodes. The job departs the system when any $k$ out of $n$ tasks are served by their respective nodes. The remaining $n-k$ tasks abandon their queues and exit the system before completion of service.
\end{defn}

The $(n,n)$ fork-join system, known in literature as fork-join queue, has been extensively studied in, e.g., \cite{kim_agarwal, nelson_tantawi, varki_merc_chen}. However, the $(n,k)$ generalization in Definition~\ref{defn:n_k_fork_join} above has not been previously studied to our best knowledge. Fig.~\ref{fig:fork_join_queue} illustrates the $(3,2)$ fork-join system corresponding to the coded distributed storage example shown in Fig.~\ref{fig:sys_model_example}. Each download request, or a job is forked to the $3$ nodes. When $2$ out of $3$ tasks are served, the third task abandons its queue and the job exits the system. For example, Job $1$ is about to exit the system, while Job $2$ is waiting for one more task to be served. The letters $F$ and $J$ denote the fork and join operations respectively.

\begin{figure}[t]
\centering
\includegraphics[width=3in]{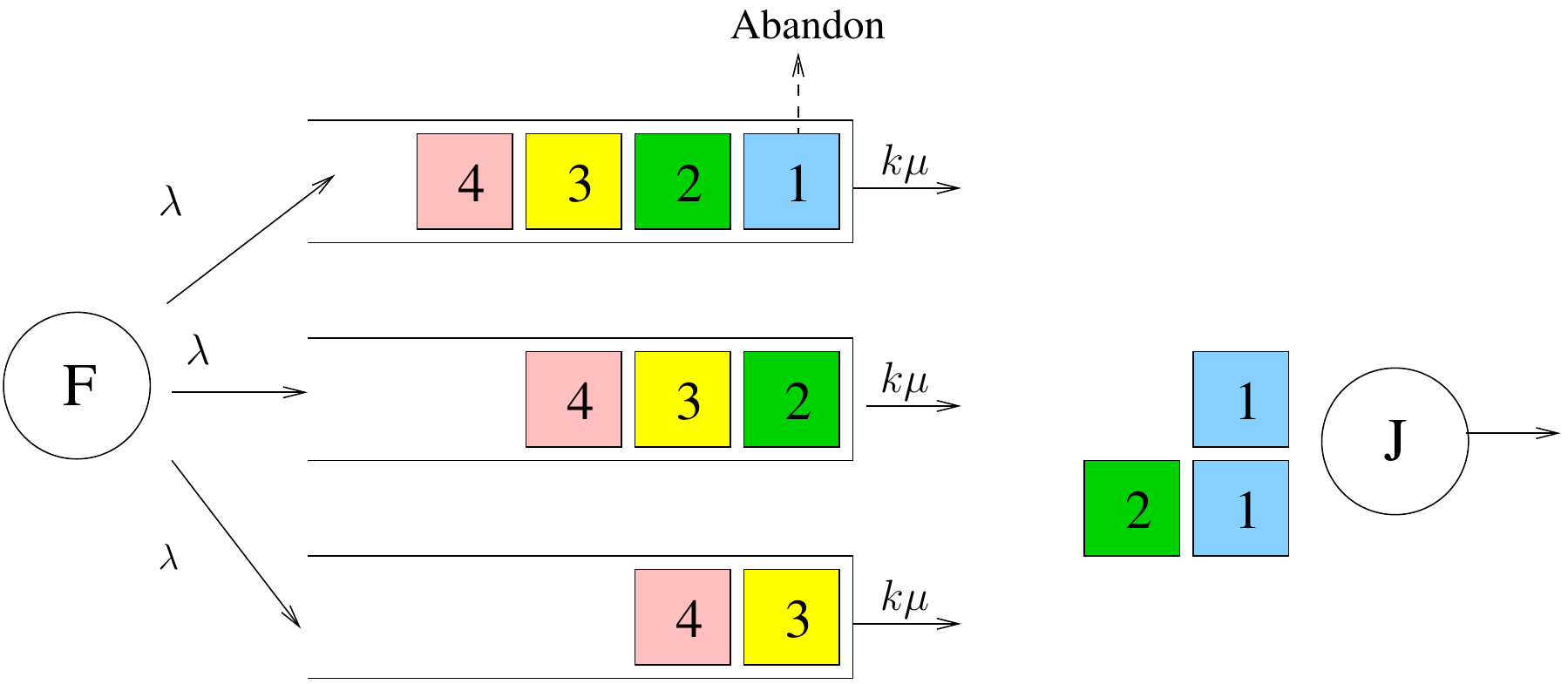}
\caption{Illustration of the $(3,2)$ fork-join system. Since $2$ out of $3$ tasks of Job $1$ are served, the third task abandons its queue and the job exits the system. Job $2$ has to wait for one more task to be served.\label{fig:fork_join_queue}}
\end{figure}

We consider that arrival of download requests is Poisson with rate $\lambda$. Every request is forked to the $n$ disks. The time taken to download one unit of data is exponential with mean $1/\mu$. Since, each disk stores $1/k$ units of data, consider that the service time for each node is exponentially distributed with mean $1/\mu'$ where $\mu' =  k\mu$. Define the load factor $\rho' \triangleq \lambda/\mu'$. This model with an M/M/1 queue at every disk is sometimes referred to as a Flatto–-Hahn-–Wright (or FHW) model \cite{flatto1984two, wright1992two} in fork-join queue literature. While most of our analytical results in Section~\ref{sec:n_k_fork_join} and Section~\ref{sec:m_n_k_fork_join} are for the FHW model, and we use simulations to study systems with M/G/1 queues at the disks in Section~\ref{sec:extensions}.


For the $(n,n)$ fork-join system to be stable, \cite{fj_stability} shows that the arrival rate $\lambda$ must be less than $\mu'$, the service rate of a node, which in our $(n,n)$ system equals to $n\mu$. In Lemma~\ref{lem:stability_n_k} below, we show that $\lambda<n\mu$ is also a necessary condition for the stability of the $(n,k)$ fork-join system for any $1 \leq k \leq n$.

\begin{lem}[Stability of $(n,k)$ fork-join system]
\label{lem:stability_n_k}
For the $(n,k)$ fork-join system to be stable, the rate of Poisson arrivals $\lambda$ and the service rate $\mu' = k \mu $ per node must satisfy
\begin{align}
\lambda &< \frac{n \mu'}{k} = n\mu. \label{eqn:stability_n_k}
\end{align}
\end{lem}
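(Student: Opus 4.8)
The plan is to establish \eqref{eqn:stability_n_k} by a work-conservation (counting) argument: I would show that the long-run throughput of the $(n,k)$ fork-join system can never exceed $n\mu$, and that a \emph{stable} system is in fact forced strictly below this ceiling because it must be completely empty for a positive fraction of time. Throughout, ``stable'' is taken to mean that the natural Markov chain describing the FCFS contents of the $n$ queues, together with the record of which of its $n$ tasks each job in the system has already completed, is positive recurrent; this process is indeed Markov since arrivals are Poisson$(\lambda)$ and the per-unit (hence per-node) service times are exponential.

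First I would record the two consequences of positive recurrence that are needed. (i) The chain regenerates each time the system empties; the regeneration cycles are i.i.d., and each cycle contains exactly as many arrivals as departures because it begins and ends in the empty state. By renewal--reward the long-run departure rate therefore equals the arrival rate $\lambda$, and the number $L(T)$ of jobs present at time $T$ satisfies $L(T)/T \to 0$ a.s. (ii) The empty state has strictly positive stationary probability $\pi_0>0$, so the long-run average number of busy nodes is at most $n(1-\pi_0) < n$.

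The heart of the argument is a bookkeeping observation about the $(n,k)$ discipline: a job leaves exactly at the instant its $k$th task completes service, at which point its remaining $n-k$ tasks are purged from the queues without ever completing. Hence every task completion can be charged to the unique job whose task just finished and which is therefore still present at that instant, and it counts toward that job's quota of $k$ completions; a job that has departed by time $T$ has been charged exactly $k$ completions, and a job still present has been charged at most $k-1$. Writing $C(T)$ for the total number of task completions across all $n$ nodes up to time $T$, and $D(T)$ for the number of departures, this gives $k\,D(T) \le C(T) \le k\,D(T) + (k-1)\,L(T)$, so dividing by $T$ and using (i) yields $C(T)/T \to k\lambda$. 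On the other hand a node accrues completions only while busy, and then at rate $\mu' = k\mu$, so the long-run completion rate equals $\mu'$ times the long-run average number of busy nodes, which by (ii) is at most $\mu'\,n(1-\pi_0)$. Combining the two estimates gives $k\lambda \le n k\mu\,(1-\pi_0)$, i.e. $\lambda \le n\mu\,(1-\pi_0) < n\mu$, which is \eqref{eqn:stability_n_k}.

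I expect the only delicate point to be the strict inequality rather than the bound $\lambda\le n\mu$ itself: the crude estimate $k\,D(T) \le C(T) \le n\mu' T$ already gives $\lambda \le n\mu$ on letting $T\to\infty$, but upgrading ``$\le$'' to ``$<$'' rests on the (standard but essential) fact that a positive recurrent queue is empty with positive probability, together with some care in passing from finite-horizon counts to long-run rates — in particular in discarding as $o(T)$ the completions already accrued by jobs that have not yet departed. An equivalent packaging of the strict bound, which sidesteps explicit use of $\pi_0$, is to run the renewal--reward computation directly over the i.i.d. regeneration cycles of (i), each of which opens with a genuine idle interval of mean $1/\lambda$ during which no completions occur.
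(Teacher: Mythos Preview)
Your argument is correct and in fact more carefully justified than the paper's own proof, but the two take genuinely different routes.

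The paper argues \emph{per queue} rather than globally: by symmetry among the $n$ nodes, each task of a given job is one of the $n-k$ abandoning tasks with probability $(n-k)/n$, so the effective arrival rate of tasks that actually get served at any fixed node is $\lambda - \lambda(n-k)/n = \lambda k/n$. Imposing the single-server stability condition $\lambda k/n < \mu'$ on that node then yields \eqref{eqn:stability_n_k}. This is short and intuitive, but it glosses over the fact that, because abandonments are triggered by completions at \emph{other} nodes, an individual queue is not literally an M/M/1 queue; the symmetry step is really a disguised bookkeeping identity.

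Your global counting argument makes exactly that bookkeeping explicit: every departing job consumes exactly $k$ task completions, so the long-run completion rate is $k\lambda$, while the $n$ servers together can produce completions at rate at most $n\mu'$, and strictly less once one accounts for the positive idle probability $\pi_0$. What your approach buys is rigor --- in particular a clean derivation of the \emph{strict} inequality from positive recurrence --- and it does not lean on symmetry, so it would extend verbatim to heterogeneous service rates $\mu_1',\dots,\mu_n'$ with $n\mu'$ replaced by $\sum_i \mu_i'$. The paper's version trades that generality and precision for brevity.

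One minor simplification: you do not actually need the upper sandwich $C(T)\le kD(T)+(k-1)L(T)$ or the fact $L(T)/T\to 0$. The inequality $kD(T)\le C(T)$ alone already gives $k\lambda \le \limsup_{T\to\infty} C(T)/T \le n\mu'(1-\pi_0)$, which is all that is required.
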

\begin{proof}
Tasks arrive at each queue at rate $\lambda$ and are served at rate $\mu' = k \mu$. But when $k$ out of the $n$ tasks finish service, the remaining $n-k$ tasks abandon their queues. A task can be one of the abandoning tasks with probability $(n-k)/n$. Hence the effective arrival rate to each queue is $\lambda$ minus the rate of abandonment $\lambda(n-k)/n$. Then the condition for stability of each queue is
\begin{align}
\lambda -\frac{\lambda (n-k)}{n} &< \mu' ,
\end{align}
which reduces to \eqref{eqn:stability_n_k}. 
\end{proof}

\subsection{Bounds on the Mean Response Time}
\label{subsec:bounds}


Our objective is to determine the expected download time, which we refer to as the mean response time $T_{(n,k)}$ of the $(n,k)$ fork-join system. It is the expected time that a job spends in the system, from its arrival until $k$ out of $n$ of its tasks are served by their respective nodes. Previous works \cite{kim_agarwal, nelson_tantawi, varki_merc_chen} have studied $T_{(n,n)}$,  but it has not been found in closed form -- only bounds are known. An exact expression for the mean response time is found only for the $(2,2)$ fork-join system \cite{nelson_tantawi}. 

Since the $n$ tasks are served by independent M/M/1 queues, intuition suggests that $T_{(n,k)}$ is the $k^{th}$ order statistic of $n$ exponential service times. However this is not true, which makes the analysis of $T_{(n,k)}$ challenging. The reason why the order statistics approach does not work is that when $j$ nodes ($j < n$) finish serving their tasks they can start serving the tasks of the next job (cf.~Fig.~\ref{fig:fork_join_queue}). As a result, the service time of a job depends on the departure time of previous jobs.



We now present upper and lower bounds on the mean response time $T_{(n,k)}$. The numerical results in Section~\ref{subsec:num_results_n_k} show that these bounds are fairly tight.

\begin{thm}[Upper Bound on Mean Response Time]
\label{thm:upper_bnd}
The mean response time $T_{(n,k)}$ of an $(n,k)$ fork-join system satisfies
\begin{align}
\label{eqn:upper_bnd}
T_{(n,k)} \leq  & \, \frac{H_n - H_{n-k}}{\mu'}\,  + \\
& \, \frac{\lambda\bigl [(H_{n^2} - H_{ (n-k)^2 }) + (H_n - H_{(n-k)})^2\bigr]}{2 \mu'^2 \bigl[ 1- \rho'(H_n - H_{n-k})\bigr]}, \nonumber
\end{align}
where $\lambda$ is the request arrival rate, $\mu'$ is the service rate at each queue, $\rho' = \lambda/\mu'$ is the load factor,
and the generalized harmonic numbers $H_n$ and $H_{n^2}$ are as given in (\ref{eq.harmonic_no}). The bound is valid only when $\rho' (H_n - H_{n-k}) < 1$.
\end{thm}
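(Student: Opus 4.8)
The plan is to upper-bound the $(n,k)$ fork-join system by a more constrained system whose mean response time admits an exact closed form, namely an M/G/1 queue, and then invoke the Pollaczek--Khinchine formula.

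\textbf{Step 1: a synchronized bounding system.} I would first introduce the \emph{synchronized} (``split-merge'') system that processes jobs one at a time in first-come first-serve order: a job is not allowed to begin service until the previous job has departed, and once a job starts, all $n$ nodes serve its $n$ tasks simultaneously, the job departing as soon as $k$ of the nodes complete. Because the per-node service times are exponential, and hence memoryless, forcing a node that finishes early to idle (instead of working ahead on the next job's task) does not change the distribution of any task's service time; a sample-path coupling — same arrival epochs and same exponential service clocks in both systems, together with an induction over jobs — then shows that every job departs the synchronized system no earlier than it departs the original $(n,k)$ fork-join system. Consequently $T_{(n,k)}$ is at most the mean response time of the synchronized system. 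This is the natural analogue of the bounding arguments used for the classical $(n,n)$ fork-join queue.

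\textbf{Step 2: identify the M/G/1 structure and apply Pollaczek--Khinchine.} The synchronized system is exactly an M/G/1 queue with Poisson arrivals at rate $\lambda$ and i.i.d.\ service times $S$, where $S$ is the time for $k$ of $n$ independent $\mathrm{Exp}(\mu')$ clocks to ring, i.e.\ $S$ is distributed as the order statistic $X_{k,n}$. From \eqref{eq.orderstat_mean} and \eqref{eq.orderstat_var} (with $\mu$ replaced by $\mu'$), $\expec[S] = (H_n - H_{n-k})/\mu'$ and $\expec[S^2] = \var[S] + (\expec[S])^2 = \bigl[(H_{n^2} - H_{(n-k)^2}) + (H_n - H_{n-k})^2\bigr]/\mu'^2$. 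The Pollaczek--Khinchine formula gives the mean sojourn time $\expec[S] + \lambda\expec[S^2]/\bigl(2(1 - \lambda\expec[S])\bigr)$, which is finite precisely when $\lambda\expec[S] = \rho'(H_n - H_{n-k}) < 1$. Substituting the two moments yields exactly the right-hand side of \eqref{eqn:upper_bnd}.

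\textbf{Main obstacle.} The delicate step is the sample-path dominance in Step~1: one must carefully track which tasks are abandoned when a job departs and argue, via memorylessness, that restarting an idled or partially served node is distributionally equivalent to letting it continue, so that removing the nodes' ability to work ahead can only postpone departures. Everything after that — recognizing the M/G/1 queue, invoking Pollaczek--Khinchine, and inserting the order-statistic moments — is routine. It is worth noting that the bounding M/G/1 queue has the smaller stability region $\rho'(H_n - H_{n-k}) < 1$, strictly inside the true stability region $\lambda < n\mu$ of Lemma~\ref{lem:stability_n_k}, which is precisely why the bound is asserted only on that restricted range.
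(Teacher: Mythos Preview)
Your proposal is correct and follows essentially the same approach as the paper: bound the $(n,k)$ fork-join system by the $(n,k)$ split-merge system, identify the latter as an M/G/1 queue with service time $S=X_{k,n}$, and apply Pollaczek--Khinchine with the order-statistic moments \eqref{eq.orderstat_mean}--\eqref{eq.orderstat_var}. If anything, you are more careful than the paper about the sample-path dominance in Step~1; the paper simply asserts that blocking the nodes makes the split-merge response time an upper bound, without spelling out the coupling.
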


\begin{proof}
To find this upper bound, we use a model called the split-merge system, which is similar but easier to analyze than the fork-join system. In the $(n,k)$ fork-join queueing model, after a node serves a task, it can start serving the next task in its queue. On the contrary, in the split-merge model, the $n$ nodes are blocked until $k$ of them finish service. Thus, the job departs all the queues at the same time. Due to this blocking of nodes, the mean response time of the $(n,k)$ split-merge model is an upper bound on (and a pessimistic estimate of) $T_{(n,k)}$ for the $(n,k)$ fork-join system. 

The $(n,k)$ split-merge system is equivalent to an M/G/1 queue where arrivals are Poisson with rate $\lambda$ and service time is a random variable $S$ distributed according to the $k^{th}$ order statistic of the exponential distribution.
The mean and variance of $S$ are (cf.~(\ref{eq.orderstat_mean}) and (\ref{eq.orderstat_var}))
\begin{equation}
\expec[S] = \frac{H_n - H_{n-k}}{\mu'}  ~~ \text{and} ~~  \var[S] = \frac{H_{n^2} - H_{(n-k)^2}}{\mu'^2}.
\label{eqn:mvk_exp}
\end{equation}
The Pollaczek-Khinchin formula \cite{dsp_gallager} gives the mean response time $T$ of an M/G/1 queue in terms of the mean and variance of $S$ as,
\begin{equation}
T = \expec[S] + \frac{ \lambda ( \expec[S]^2  + \var[S] } {2( 1-\lambda \expec[S])}. \label{eqn:pollac_khin}
\end{equation}
Substituting the values of $\expec[S]$ and $\var[S]$ given by (\ref{eqn:mvk_exp}), we get the upper bound (\ref{eqn:upper_bnd}). Note that the Pollaczek-Khinchin formula is valid only when $ \frac{1}{\lambda} > \expec[S]$, the stability condition of the M/G/1 queue. Since $\expec[S]$ increases with $k$, there exists a $k_0$ such that the M/G/1 queue is unstable for all $k \geq k_0$. The inequality $\frac{1}{\lambda} > \expec[S]$ can be simplified to $\rho' (H_n - H_{n-k}) <1$ which is the condition for validity of the upper bound given in Theorem~\ref{thm:upper_bnd}.
\end{proof}


\begin{rem}
\label{rem:upper_bnd}
For the $(n,n)$ fork-join system, the authors in \cite{nelson_tantawi} find an upper bound on mean response time different from \eqref{eqn:upper_bnd} derived above. To find the bound, they first prove that the response times of the $n$ queues form a set of associated random variables \cite{assoc_rand_vars}. Then they use the property of associated random variables that their expected maximum is less than that for independent variables with the same marginal distributions. However this approach used in \cite{nelson_tantawi} cannot be extended to the $(n,k)$ fork-join system with $k < n$ because this property of associated variables does not hold for the $k^{th}$ order statistic for $k<n$.
\end{rem}

As a corollary to Theorem~\ref{thm:upper_bnd} above, we can get an exact expression for $T_{(n,1})$, the mean response time of the $(n,1)$ fork-join system. Recall that in the $(n,1)$ fork-join system, the entire content is replicated on $n$ disks, and we just have to wait for any one disk to serve the incoming request.

\begin{corr}
The mean response time $T_{(n,1)}$ of the $(n,1)$ fork-join system is given by
\begin{equation}
T_{(n,1)} = \frac{1}{n\mu - \lambda}, \label{eqn:T_n_1}
\end{equation}
where $\lambda$ is the rate of Poisson arrivals and $\mu$ is the service rate. 
\end{corr}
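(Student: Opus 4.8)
The plan is to observe that for $k=1$ the upper bound of Theorem~\ref{thm:upper_bnd} is actually attained with equality, and then to simplify its right-hand side. First I would argue that the split-merge system used to prove Theorem~\ref{thm:upper_bnd} coincides \emph{exactly} with the $(n,1)$ fork-join system. The only structural difference between the two models is that in the split-merge system all $n$ nodes stay blocked until $k$ of them finish, whereas in the fork-join system a node may start the next task as soon as it finishes its current one. When $k=1$ a job departs the instant the \emph{first} of its $n$ tasks completes; at that moment the other $n-1$ tasks abandon and every node simultaneously begins the next job's task. Hence no node is ever blocked, the $n$ queues stay perfectly synchronized, and the two models are identical. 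Therefore $T_{(n,1)}$ equals the mean response time of the M/G/1 queue in the proof of Theorem~\ref{thm:upper_bnd}, so \eqref{eqn:upper_bnd} holds with equality for $k=1$.

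Next I would evaluate \eqref{eqn:upper_bnd} at $k=1$. Here $H_n - H_{n-1} = 1/n$ and $H_{n^2} - H_{(n-1)^2} = 1/n^2$, and since $\mu' = k\mu = \mu$, the service random variable $S$ of \eqref{eqn:mvk_exp} — the minimum of $n$ i.i.d.\ rate-$\mu$ exponentials — satisfies $\expec[S] = 1/(n\mu)$ and $\var[S] = 1/(n\mu)^2 = \expec[S]^2$, as expected because that minimum is itself exponential with rate $n\mu$. Substituting into the Pollaczek--Khinchin formula \eqref{eqn:pollac_khin} and collecting terms gives $T_{(n,1)} = \frac{1}{n\mu} + \frac{\lambda}{n\mu(n\mu - \lambda)} = \frac{1}{n\mu - \lambda}$, which is \eqref{eqn:T_n_1}. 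The validity condition $\rho'(H_n - H_{n-k}) < 1$ reduces to $\lambda < n\mu$, precisely the stability condition of Lemma~\ref{lem:stability_n_k} at $k=1$, so the expression is valid throughout the stable region.

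An equivalent and perhaps more transparent route, which I would at least mention, is to note directly that the $(n,1)$ fork-join system \emph{is} an M/M/1 queue with arrival rate $\lambda$ and service rate $n\mu$: the head-of-line job is worked on by $n$ independent rate-$\mu$ servers in parallel and leaves when the fastest finishes, so the effective service rate is $n\mu$, and the classical M/M/1 mean-response-time formula yields \eqref{eqn:T_n_1} at once. There is no genuine obstacle in this corollary; the only point requiring care is the justification that the fork-join, split-merge, and single-server descriptions really do coincide when $k=1$, i.e.\ that the $n$ local queues never desynchronize — once that is established the rest is a short algebraic simplification.
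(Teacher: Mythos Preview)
Your proposal is correct and follows essentially the same approach as the paper: observe that for $k=1$ the split-merge and fork-join systems coincide, so the upper bound of Theorem~\ref{thm:upper_bnd} holds with equality, and then substitute $k=1$, $\mu'=\mu$ into \eqref{eqn:mvk_exp} and \eqref{eqn:pollac_khin} to obtain \eqref{eqn:T_n_1}. Your write-up is more detailed than the paper's (which leaves both the algebra and the justification of the split-merge/fork-join equivalence implicit), and your alternative direct M/M/1 observation is a nice addition not present in the paper.
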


\begin{proof}
In Theorem~\ref{thm:upper_bnd} we constructed the $(n,k)$ split-merge system which always has worse response time than the corresponding $(n,k)$ fork-join system. For the special case when $k=1$, the split-merge system is equivalent to the fork-join system and gives the same response time. Substituting $k=1$ and $\mu' = k \mu = \mu$ in \eqref{eqn:mvk_exp} and \eqref{eqn:pollac_khin} we get the result \eqref{eqn:T_n_1}.
\end{proof}

\begin{thm}[Lower Bound on Mean Response Time]
\label{thm:lower_bnd}
The mean response time $T_{(n,k)}$ of an $(n,k)$ fork-join queueing system satisfies
\begin{equation}
T_{(n,k)} \!\! \geq \! \frac{1}{\mu'} \! \bigl[ H_n - H_{n-k} + \rho' (H_{n(n-\rho')} - H_{(n-k)(n-k-\rho')}) \bigr], \label{eqn:lower_bnd}
\end{equation}
where $\lambda$ is the request arrival rate, $\mu'$ is the service rate at each queue, $\rho' = \lambda/\mu'$ is the load factor,
and the generalized harmonic number $H_{n(n-\rho')}$ is given by
\begin{equation*}
H_{n(n-\rho')} = \sum_{j=1}^n \frac{1}{j(j-\rho')}.
\end{equation*}
\end{thm}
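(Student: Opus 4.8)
The plan is to lower-bound $T_{(n,k)}$ by the mean sojourn time of a much simpler \emph{tandem} (series) of $k$ single-server queues. The key structural fact is that, since each node serves its queue FCFS, the sets of nodes at which a job has already finished its task are nested in order of arrival: if job $m'$ arrived after job $m$, every node that has completed $m'$'s task has also completed $m$'s task, so at every instant $m$ has completed at least as many tasks as $m'$. Hence at any time we may rank the jobs present by arrival, $[1],[2],\ldots,[M]$, and their completion counts satisfy $n \ge c_{[1]} \ge c_{[2]} \ge \cdots \ge c_{[M]} \ge 0$ with $c_{[1]} \le k-1$. Classify the jobs present into \emph{phases}: a job is in phase $i$ ($1 \le i \le k$) if it has completed exactly $i-1$ tasks. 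By the nesting property the phase-$i$ jobs form a contiguous block $[r_1],\ldots,[r_2]$ in the ranking; the task of the globally oldest job sits at the head of every node at which that job has not completed, and by induction on rank one checks that job $[r]$ is in service at exactly $c_{[r-1]}-c_{[r]}$ nodes (with $c_{[0]}:=n$). In particular, within phase $i$ only the oldest job $[r_1]$ is actually in service, at $c_{[r_1-1]}-(i-1)$ nodes, while $[r_1+1],\ldots,[r_2]$ are not in service at all.

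Consequently the instantaneous rate at which phase $i$ produces a completion is $R_i = \bigl(c_{[r_1-1]}-(i-1)\bigr)\mu'$ when phase $i$ is nonempty and $0$ otherwise; and since $c_{[r_1-1]}\le n$, with equality only when $[r_1]$ is the globally oldest job, we always have $R_i \le (n-i+1)\mu'$. Thus phase $i$ behaves \emph{exactly} like a single-server FCFS queue whose server, whenever busy, runs no faster than a constant-rate server of rate $(n-i+1)\mu'$. Moreover jobs pass through the phases in a strict tandem $1\to 2\to\cdots\to k$: the external input to phase $1$ is Poisson with rate $\lambda$, the input to phase $i$ is precisely the output of phase $i-1$, and leaving phase $k$ means departing the system.

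Now compare, on a common sample path of external arrivals, the $(n,k)$ fork-join system with a \emph{reference} system that is a tandem of $k$ independent M/M/1 queues in which queue $i$ has service rate $(n-i+1)\mu'$. Using a sample-path monotonicity argument for FCFS single-server queues (a queue fed no earlier and served no faster lets every job depart no earlier), together with an induction on the phase index $i$, each job completes its $i$-th task in the fork-join system no earlier than it leaves queue $i$ in the reference tandem; in particular it departs no earlier overall. Taking expectations in steady state — the reference tandem is stable whenever the fork-join is, since $\lambda < n\mu \le (n-k+1)\mu'$ by Lemma~\ref{lem:stability_n_k} — gives that $T_{(n,k)}$ is at least the reference tandem's mean sojourn time. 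That tandem is a Jackson network, so by Burke's theorem its $k$ queues are independent M/M/1 queues in steady state and the mean sojourn time equals $\sum_{i=1}^{k}\frac{1}{(n-i+1)\mu'-\lambda} = \sum_{j=n-k+1}^{n}\frac{1}{j\mu'-\lambda}$. Finally, the elementary identity $\frac{1}{j}+\frac{\rho'}{j(j-\rho')}=\frac{1}{j-\rho'}$ rewrites this sum as $\frac{1}{\mu'}\bigl[(H_n-H_{n-k})+\rho'(H_{n(n-\rho')}-H_{(n-k)(n-k-\rho')})\bigr]$, which is the claimed bound.

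The step I expect to be the main obstacle is making the comparison with the reference tandem rigorous: one must justify that the ``phase-$i$ server'' really is a single-server FCFS queue whose speed is dominated by the constant-rate reference server at every instant (the delicate point being that its rate $(c_{[r_1-1]}-(i-1))\mu'$ is state-dependent and can attain $(n-i+1)\mu'$, so only $\le$, not $<$, holds), and then invoke the appropriate stochastic/sample-path monotonicity theorem for queues with time-varying, state-dependent service rates so that the ordering propagates cleanly from phase to phase through the tandem. Once that is in place, Burke's theorem and the harmonic-number algebra are routine.
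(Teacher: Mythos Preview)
Your proposal follows essentially the same route as the paper: decompose a job's sojourn into $k$ ``stages'' (your ``phases''), argue that in stage $j$ the effective service rate is at most $(n-j)\mu'$, and then replace each stage by an M/M/1 queue with arrival rate $\lambda$ and that maximal service rate, summing the resulting mean response times $\sum_{j=0}^{k-1}1/((n-j)\mu'-\lambda)$ and rewriting via the identity $\tfrac{1}{j-\rho'}=\tfrac{1}{j}+\tfrac{\rho'}{j(j-\rho')}$. The paper's argument, which extends the $(n,n)$ bound of Varki--Merchant--Chen, is considerably terser: it simply observes that an earlier job can block one of the $n-j$ remaining tasks from being at the head of its queue, so the stage-$j$ rate is at most $(n-j)\mu'$, and then asserts the M/M/1 lower bound per stage without spelling out the coupling.

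Where you differ is in rigor and in the supporting structure. Your nesting observation (FCFS forces the completion sets of present jobs to be nested, so at most one job per phase is actually in service, at $c_{[r_1-1]}-(i-1)\le n-i+1$ nodes) makes explicit exactly why each phase behaves like a single FCFS server with bounded rate, and your appeal to sample-path monotonicity plus Burke's theorem to push the comparison through the tandem is the right way to close the gap you correctly flag as the ``main obstacle.'' The paper leaves that step implicit. So: same decomposition and same final bound; your write-up is a more careful version of the paper's argument rather than a genuinely different one.
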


\begin{proof}
The lower bound in (\ref{eqn:lower_bnd}) is a generalization of the bound for the $(n,n)$ fork-join system derived in \cite{varki_merc_chen}. The bound for the $(n,n)$ system is derived by considering that a job goes through $n$ stages of processing. A job is said to be in the $j^{th}$ stage if $j$ out of $n$ tasks have been served by their respective nodes for $0 \leq j \leq n-1$. The job waits for the remaining $n-j$ tasks to be served, after which it departs the system. For the $(n,k)$ fork-join system, since we only need $k$ tasks to finish service, each job now goes through $k$ stages of processing. In the $j^{th}$ stage, where $0 \leq j \leq k-1$, $j$ tasks have been served and the job will depart when $k-j$ more tasks to finish service.

We now show that the service rate of a job in the $j^{th}$ stage of processing is \emph{at most} $(n-j) \mu'$. Consider two jobs $B_1$ and $B_2$ in the $i^{th}$ and $j^{th}$ stages of processing respectively. Let $i>j$, that is, $B_1$ has completed more tasks than $B_2$. Job $B_2$ moves to the $(j+1)^{th}$ stage when one of its $n-j$ remaining tasks complete. If all these tasks are at the heads of their respective queues, the service rate for job $B_2$ is exactly $(n-j) \mu'$. However since $i > j$, $B_1$'s task could be ahead of $B_2$'s in one of the $n-j$ pending queues, due to which that task of $B_2$ cannot be immediately served. Hence, we have shown that the service rate of in the $j^{th}$ stage of processing is at most $(n-j)\mu'$.

%

Thus, the time for a job to move from the $j^{th}$ to $(j+1)^{th}$ stage is lower bounded by $1/( (n-j)\mu' - \lambda)$, the mean response time of an M/M/1 queue with arrival rate $\lambda$ and service rate $(n-j) \mu'$. The total mean response time is the sum of the mean response times of each of the $k$ stages of processing and is bounded below as

\begin{align*}
T_{(n,k)} &\geq \sum_{j=0}^{k-1} \frac{1}{(n-j)\mu' - \lambda},
= \frac{1}{\mu'}  \sum_{j=0}^{k-1} \frac{1}{ (n-j) - \rho'}, \\
&= \frac{1}{\mu'}  \sum_{j=0}^{k-1} \Bigl[ \frac{1}{n-j} + \frac{\rho'}{(n-j)(n-j-\rho')} \Bigr], \\
&= \!\frac{1}{\mu'} \bigl[\! H_n \!-\! H_{n-k} \!+\! \rho' (\!H_{n(n-\rho')}\! -\! H_{(n-k)(n-k-\rho')}\!)\! \bigr].
\end{align*}

\end{proof}

\begin{figure}[!ht]
        \centering
        \begin{subfigure}[hbt]{0.45\textwidth}
                \centering
                 \includegraphics[width=\textwidth]{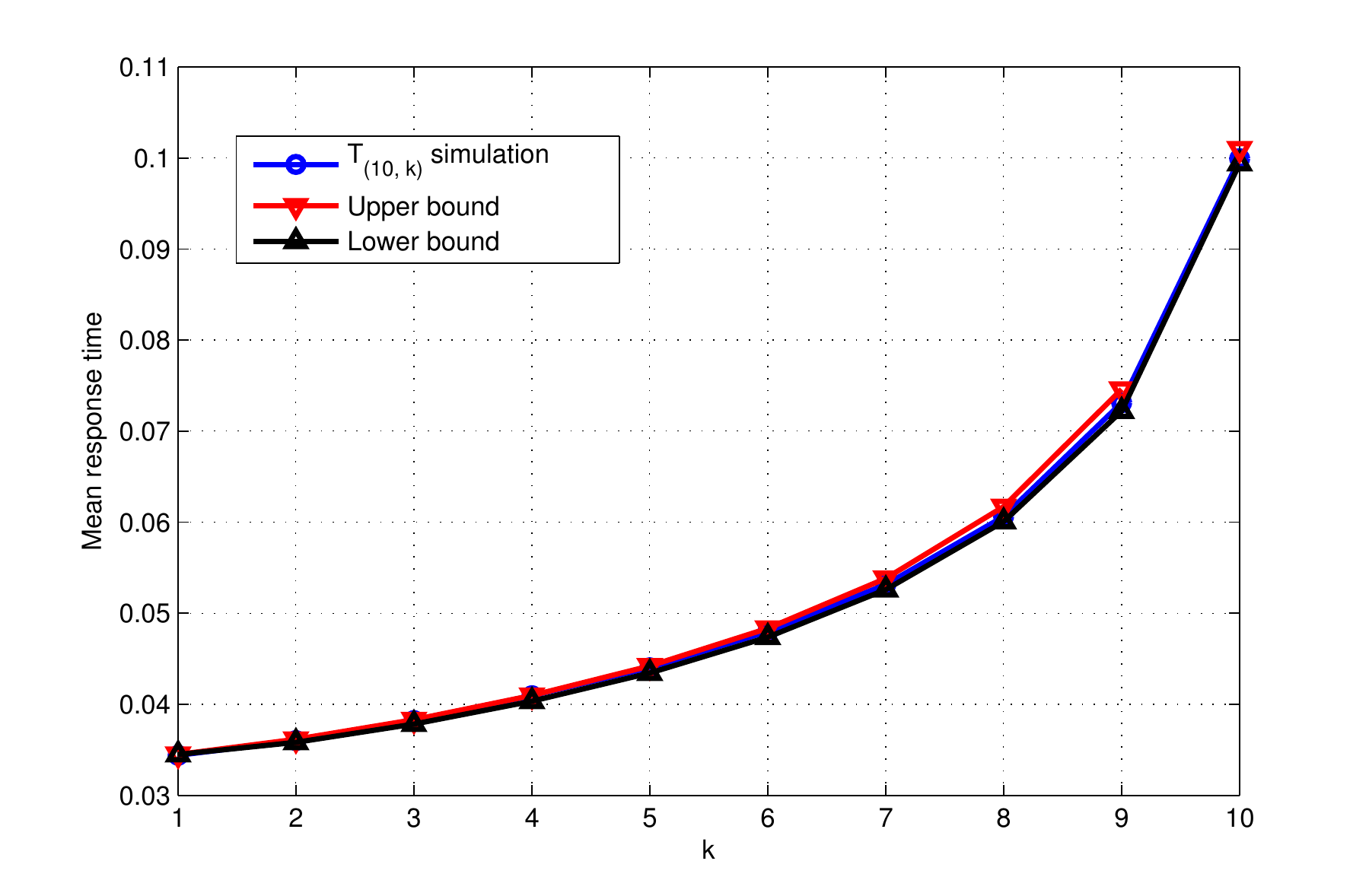}
                \caption{Arrival rate $\lambda = 1$ and service rate $\mu = 3$.}
                \label{fig:nksim}
        \end{subfigure}%
        ~ 

        \begin{subfigure}[hbt]{0.45\textwidth}
                \centering
                \includegraphics[width=\textwidth]{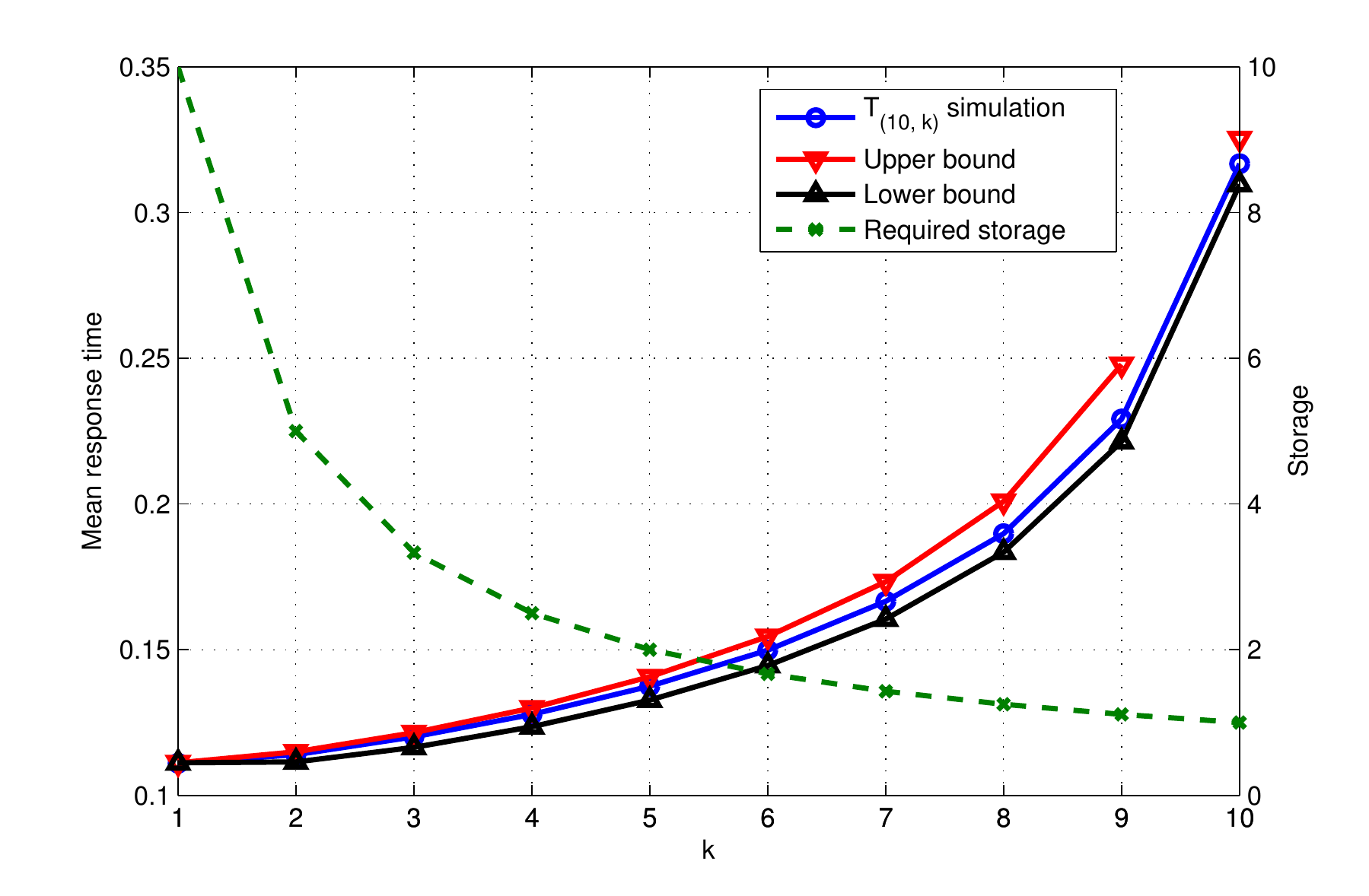}
                \caption{Arrival rate $\lambda = 1$ and service rate $\mu = 1$.}
                \label{fig:not_tight_bounds}
        \end{subfigure}

        \begin{subfigure}[hbt]{0.45\textwidth}
                \centering
                \includegraphics[width=\textwidth]{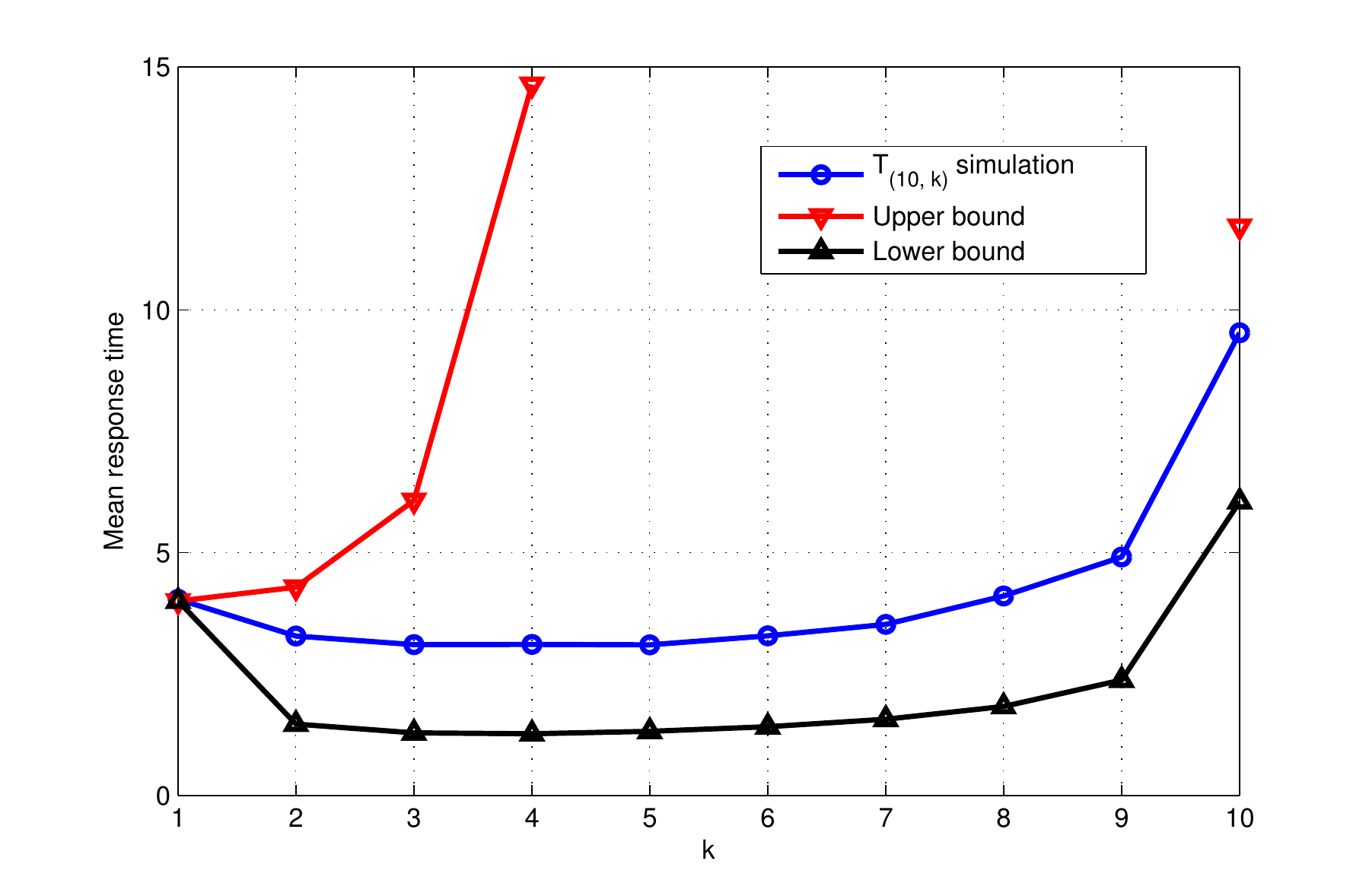}
                \caption{Arrival rate $\lambda = 1$ and service rate $\mu = 1/8$.}
                \label{fig:verybad_bounds}
        \end{subfigure}
        \caption{Behavior of the mean response time $T_{(10, k)}$ as $k$ increases (and total storage $n/k$ decreases). The plot shows that the bounds on mean response time given by \eqref{eqn:upper_bnd} and \eqref{eqn:lower_bnd} are tight when the system is lightly loaded and become loose as $\mu$ decreases and/or $k$ increases.}
        \label{fig.Tnk_sim}
\end{figure}

Hence, we have found lower and upper bounds on the mean response time $T_{(n,k)}$. In Fig.~\ref{fig.Tnk_sim} we demonstrate how the tightness of the bounds changes with service rate $\mu$. The figure shows the mean response time of a $(10, k)$ fork-join system versus $k$ for  service rates $\mu = 3, 1$ and $1/8$. Note that the upper bound for $k=n$ shown in the plot is $T_{(n,n)}\le H_n/(n\mu-\lambda)$ as given in \cite{nelson_tantawi}, instead of the bound in \eqref{eqn:upper_bnd}. The reason behind this substitution is explained in Remark~\ref{rem:upper_bnd}. 

We observe in Fig.~\ref{fig.Tnk_sim} that the bounds become loose as $k$ increases and/or $\mu$ decreases. In particular, the upper bound becomes loose because the blocking of queues in split-merge system becomes significant when $k$ increases and/or $\mu$ decreases. For $\mu = 1/8$, the upper bound in \eqref{eqn:upper_bnd} becomes invalid for $ k \geq 5$ because the condition $\rho' (H_n - H_{n-k}) < 1$ is violated. Similarly, the lower bound becomes loose with increasing $k$ and decreasing $\mu$ because the difference between the actual service rate in the $j^{th}$ stage of processing, and its bound $(n-j) \mu'$ increases. When $k=1$, the bounds coincide and give $T_{(n,1)}=1/(n\mu-\lambda)$.


\subsection{Download Time vs.\ Storage Space Trade-off}
\label{subsec:num_results_n_k}
\begin{figure*}[ht]
\begin{center}
\begin{tikzpicture}
\node at (0,0) (P) {\includegraphics[scale=0.55]{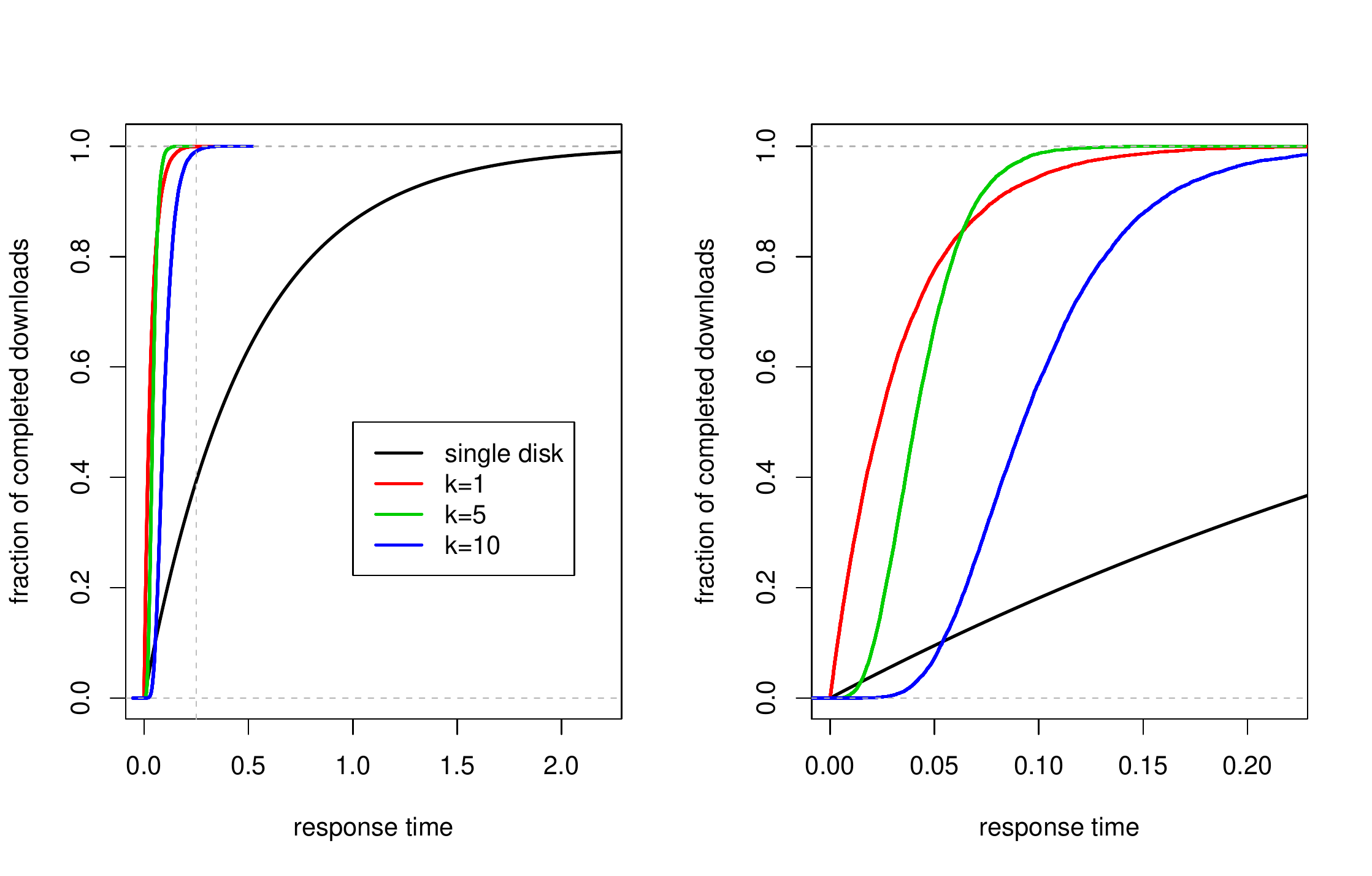}};
\end{tikzpicture}
\end{center}
\vspace{-0.5cm}

\begin{tikzpicture} [scale=0.35]
\def\myangle{360};
\def\dr{25pt}
\def\hr{3pt}
\def\tr{5pt}
\foreach \n in {0} {
\coordinate (A) at (\n,0);
\draw [black!30, - , fill] (A) -- ($(A)+(\dr,0)$)  arc (0:360:\dr) -- cycle;
\draw [black, - , fill] (A) -- ($(A)+(\dr,0)$)  arc (0:\myangle:\dr) -- cycle;
\draw [white, fill] (A) circle (\tr);
\draw [black!50] (A) circle (\hr);
}
\draw (1.5*\dr,0) node [right] {$\leftarrow$ \small{single disk baseline -- unit storage}};
\end{tikzpicture}

\begin{tikzpicture} [scale=0.35]
\def\myangle{36};
\def\dr{25pt}
\def\hr{3pt}
\def\tr{5pt}
\foreach \n in {0,2,4,6,8,10,12,14,16,18} {
\coordinate (A) at (\n,0);
\draw [black!30, - , fill] (A) -- ($(A)+(\dr,0)$)  arc (0:360:\dr) -- cycle;
\draw [blue, - , fill] (A) -- ($(A)+(\dr,0)$)  arc (0:\myangle:\dr) -- cycle;
\draw [white, fill] (A) circle (\tr);
\draw [black!50] (A) circle (\hr);
}
\draw ($(A)+(1.5*\dr,0)$) node [right] {$\leftarrow$ \small{the same total storage}};
\end{tikzpicture}

\begin{tikzpicture} [scale=0.35]
\def\myangle{72};
\def\dr{25pt}
\def\hr{3pt}
\def\tr{5pt}
\foreach \n in {0,2,4,6,8,10,12,14,16,18} {
\coordinate (A) at (\n,0);
\draw [black!30, - , fill] (A) -- ($(A)+(\dr,0)$)  arc (0:360:\dr) -- cycle;
\draw [green, - , fill] (A) -- ($(A)+(\dr,0)$)  arc (0:\myangle:\dr) -- cycle;
\draw [white, fill] (A) circle (\tr);
\draw [black!50] (A) circle (\hr);
}
\draw ($(A)+(1.5*\dr,0)$) node [right] {$\leftarrow$ \small{double total storage}};
\end{tikzpicture}

\begin{tikzpicture} [scale=0.35]
\def\myangle{360};
\def\dr{25pt}
\def\hr{3pt}
\def\tr{5pt}
\foreach \n in {0,2,4,6,8,10,12,14,16,18} {
\coordinate (A) at (\n,0);
\draw [black!30, - , fill] (A) -- ($(A)+(\dr,0)$)  arc (0:360:\dr) -- cycle;
\draw [red, - , fill] (A) -- ($(A)+(\dr,0)$)  arc (0:\myangle:\dr) -- cycle;
\draw [white, fill] (A) circle (\tr);
\draw [black!50] (A) circle (\hr);
}
\draw ($(A)+(1.5*\dr,0)$)  node [right] {$\leftarrow$ \small{$10\times$ increase in storage}};
\end{tikzpicture}
\caption{CDFs of the response time of $(10, k)$ fork-join systems, and the required storage\label{fig:flexible_storage_cdf}  }
\end{figure*}

In this section we present numerical results demonstrating the fundamental trade-off between storage and response time of the $(n,k)$ fork-join system. We also compare the response time of the $(n,k)$ fork-join system to the power-of-$d$ and LWL assignment policies introduced in Section~\ref{subsec:power_of_d_intro}. 

The expected download time of the file can be reduced in two ways 1) by increasing the total storage, or the storage expansion $n/k$ per file, and 2) by increasing the number $n$ of disks used for file storage. Both the total storage and the number of disks could be a limiting factor in practice. We first address the scenario where the number of disks $n$ is kept constant, but the storage expansion changes from $1$ to $n$ as we choose $k$ from $1$ to $n$. We then study the scenario where the storage expansion factor $n/k$ is kept constant, but the number of disks varies.


\subsubsection{Flexible Storage Expansion \& Fixed Number of Disks}

Fig.~\ref{fig.Tnk_sim} is a plot of mean response versus $k$ for a fixed number of disks $n$. Note that as we increase $k$, the total storage $n/k$ used decreases as shown in Fig.~\ref{fig:not_tight_bounds}. When we increase $k$, two factors affect the mean response time $T_{(n,k)}$ in opposite ways: 1) As $k$ increases the storage per disk reduces which reduces mean response time. 2) With higher $k$ we have to wait for more nodes to finish service for the job to exit the system. Hence we lose the diversity benefit of coding, which results in an increase in the mean response time.

In Fig.~\ref{fig.Tnk_sim} we observe that when $\mu = 1$ or $3$, the second factor dominates causing the mean response time $T_{(n,k)}$ to strictly increase with $k$. At lower service rate $\mu = \frac{1}{8}$ shown in Fig.~\ref{fig:verybad_bounds}, the mean response time first decreases, and then increases with $k$. At small $k$ (e.g.~$k = 1$), the per-node service time $1/k\mu$ becomes large, outweighing the benefit of waiting for just $k$ nodes to finish service. At large $k$, waiting for many nodes to response outweighs the fast $1/k\mu$ service time. Due to this phenomenon, there is an optimal $k$ that minimizes the mean response time.


In addition to small mean response time, ensuring quality-of-service to the user may also require that the probability of exceeding some maximum tolerable response time to be small. Thus, we study the cumulative distribution function (CDF) of the response time for different values of $k$ for a fixed $n$. In Fig.~\ref{fig:flexible_storage_cdf} we plot the CDF of the response time with $k= 1, 2 , 5, 10$ for fixed $n=10$. The arrival rate and service rate are $\lambda = 1$ and $\mu = 3$ as defined earlier. For $k=1$, the CDF is represents the minimum of $n$ exponential random variables, which is also exponentially distributed.


The CDF plot can be used to design a storage system that gives probabilistic bounds on the response time. For example, if we wish to keep the response time below $0.1$ seconds with probability at least $0.75$, then the CDF plot shows that $k=5,10$ satisfy this requirement but $k=1$ does not. The plot also shows that at $0.4$ seconds, $100\%$ of requests are complete in all fork-join systems, but only $50\%$ are complete in the single-disk case.

\subsubsection{Flexible Number of Disks \& Fixed Storage Expansion }
Next, we take a different viewpoint and analyze the benefit of spreading the content across more disks while using the same total storage space. Fig.~\ref{fig.paretosim_fixratio} plots the bounds (\ref{eqn:upper_bnd}) and (\ref{eqn:lower_bnd}) on the mean response time $T_{(n,k)}$ versus $k$ while keeping constant code rate $k/n= 1/2$, for the $(n,k)$ fork-join system with $\lambda = 1$ and three different values of $\mu$. 
\begin{figure}[ht]
\centering
\includegraphics[scale=0.45]{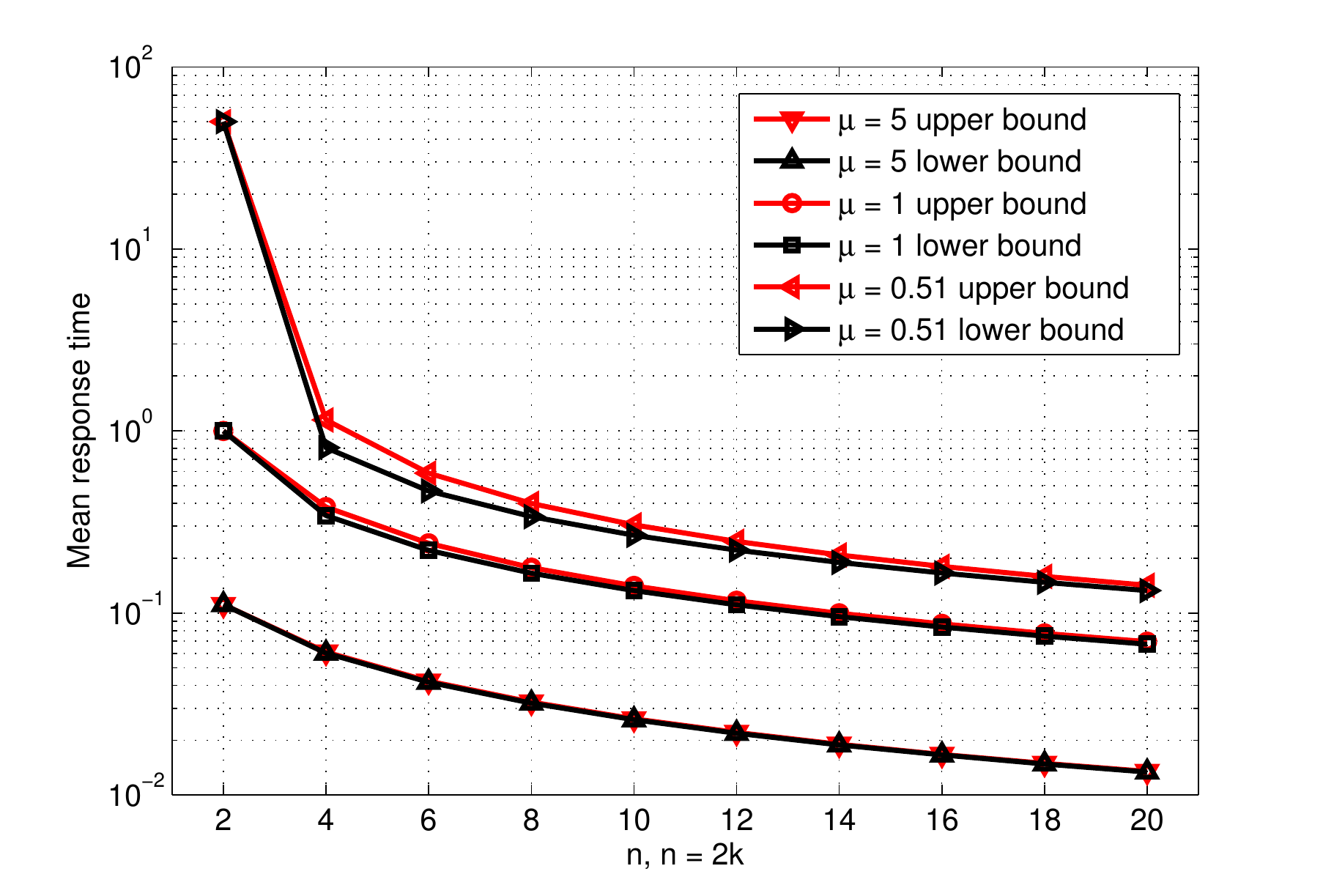}
\caption{Mean response time upper and lower bounds on the mean response time $T_{(n, n/2)}$ for $\lambda = 1$ and three different service rates $\mu$. Due to the diversity advantage of more disks, $T_{(n, n/2)}$ reduces with $n$.}
\label{fig.paretosim_fixratio}
\end{figure}
For these parameter values the bounds are tight, and can be used for analysis in place of simulations.

We observe that the mean response time $T_{(n,k)}$ reduces as $k$ increases because we get the diversity advantage of having more disks.
The reduction in $T_{(n,k)}$ happens at the higher rate for small values of $k$ and $\mu$.
For heavy-tailed distributions (e.g.\ Pareto, cf.\ Sec.~\ref{sec:extensions}), the benefit that comes from diversity is even larger.

$T_{(n,k)}$ approaches zero as $n \rightarrow \infty$ for a fixed storage expansion $n/k$.
This is because we assumed that service rate of a single disk is $k \mu$ since the $1/k$ units of the content $F$ is stored on one disk. However, in practice the mean service time $1/k\mu$ will not go zero as reading each disk will need some non-zero setup time in completing each task, irrespective of the amount of data read from the disk. In Section~\ref{sec:extensions} we will see how this setup time affects the delay-storage trade-off.

In order to understand the response time better, we plot in Fig.~\ref{fig:fixed_storage_cdf} the CDF for different values of $k$ for a fixed ratio $k/n = 1/2$. Again we observe that the diversity of increasing number of disks $n$ helps to reduce the response time.

\begin{figure}[ht]
\begin{center}
\includegraphics[scale=0.4]{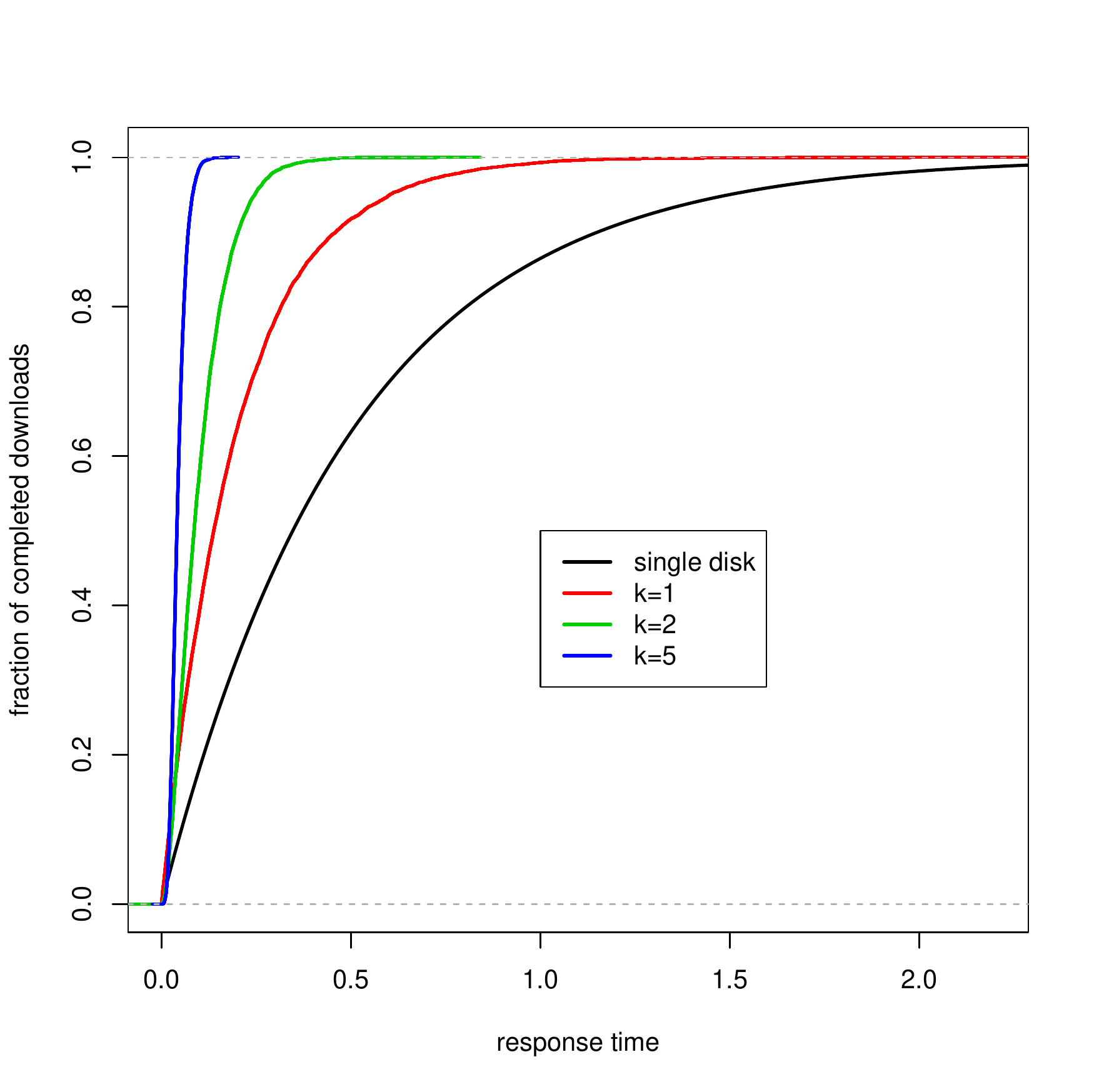}
\end{center}
\caption{CDFs of the response time of $(n, k = n/2)$ fork-join systems, and the required storage\label{fig:fixed_storage_cdf}}
\end{figure}

\subsubsection{Comparison with Power-of-$d$ Assignment}

\begin{figure}[ht]
\centering
\includegraphics[scale=0.45]{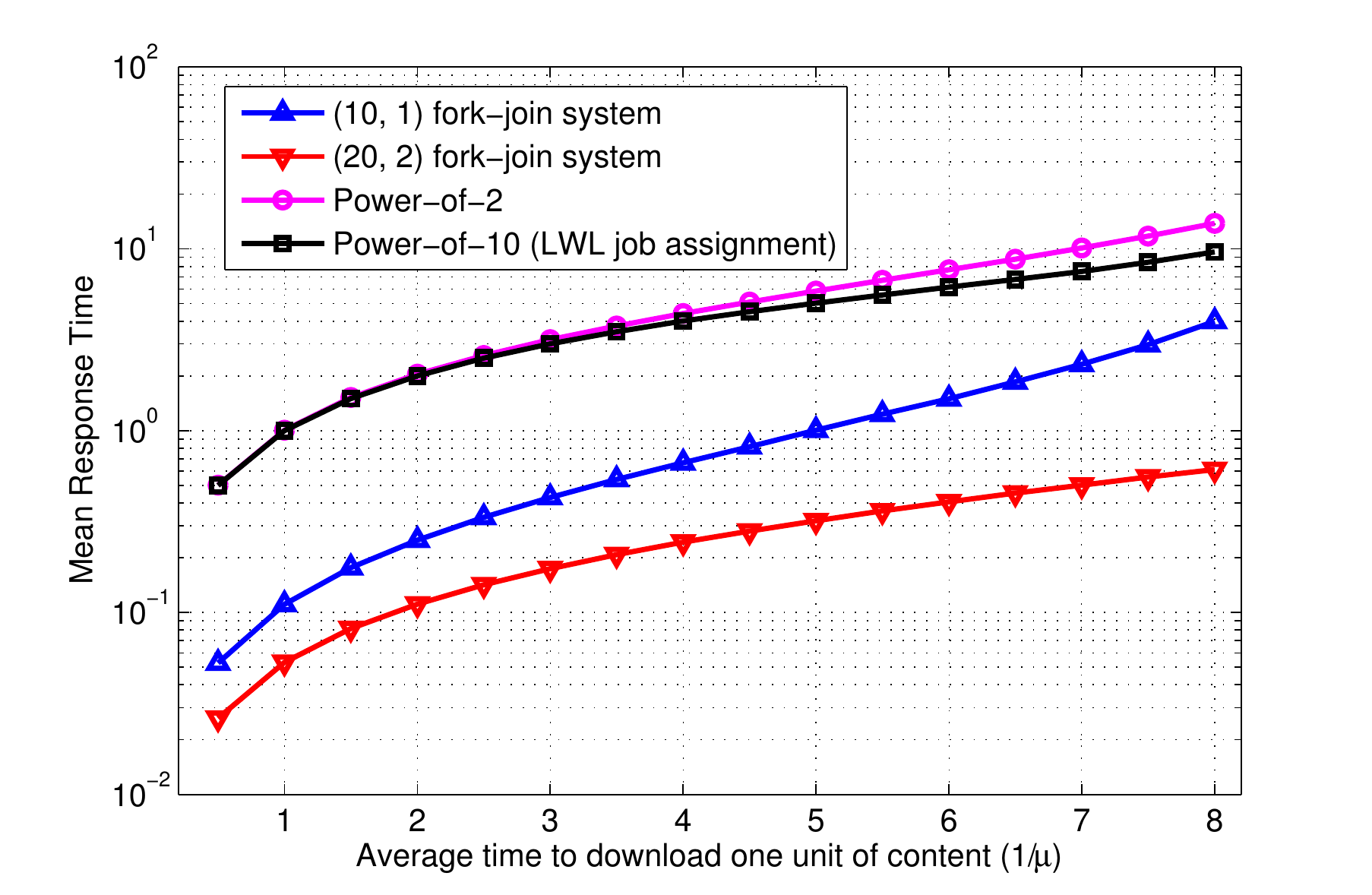}
\caption{For $\lambda = 1$ and the same amount of total storage used  ($10$ units), the fork-join system has lower mean response time than the corresponding power-of-$d$ and LWL assignment policies.}
\label{fig.powerofd_nksim}
\end{figure}


We now compare the mean response time of the $(n,k)$ fork-join system with power-of-$d$ and least-work-left (LWL) job assignment introduced in Section~\ref{sec:main_idea}. Recall that for each incoming request, the power-of-$d$ policy assigns a request to the node with the least-work-left from among $d$ uniformly selected nodes. Fig.~\ref{fig.powerofd_nksim} is a plot of the mean response time versus $1/\mu$, the average time taken to download one unit of content. It compares the $(n,k)$ fork-join system which uses $n/k$ units of total storage with the power-of-$d$ and LWL assignment policies with the entire content (one unit) replicated on the $n/k$ disks. Thus, all the systems shown in Fig.~\ref{fig.powerofd_nksim} use the same total storage space $n/k = 10$ units. 

We observe in Fig.~\ref{fig.powerofd_nksim} that the fork-join system outperforms the power-of-$d$ and LWL assignment policies. This is because as we saw in Fig.~\ref{fig.paretosim_fixratio}, when we increase $n$ and $k$ while keeping the ratio $n/k$ (the total storage) fixed, the mean response time of the $(n,k)$ fork-join system decreases. That is, the diversity advantage dominates over the slowdown due to waiting for more nodes to finish service. Thus, for large enough $n$, the $(n,k)$ fork-join system outperforms the corresponding power-of-d scheme that uses the same storage space $n/k$ units. 


There are other practical issues that are not considered in Fig.~\ref{fig.powerofd_nksim}. For instance, in the $(n, k)$ fork-join system there are communication costs associated with forking jobs to $n$ nodes and costs of decoding the MDS coded blocks. On the other hand, the power-of-$d$ assignment system requires constant feedback from the nodes to determine the work left at each node.


\section{Generalizing the Service Distribution}
\label{sec:extensions}

The theoretical analysis and numerical results so far assumed a specific service time distribution at each node -- we considered the exponential distributions. In this section we present some results by generalizing the service time distribution. In Section~\ref{subsec:fj_general_service_time} we extend the upper bound to general service time distributions. We present numerical results for heavy-tailed and correlated service times in Section~\ref{subsec:heavy_tailed_service} and Section~\ref{subsec:correlated_service} respectively. 


\subsection{General Service Time Distribution}
\label{subsec:fj_general_service_time}
%

In several practical scenarios the service distribution is unknown. We present an upper bound on the mean response time for such cases, only using the mean and the variance of the service distribution. Let $X_1, X_2, \ldots, X_n$ be the i.i.d random variables representing the service times of the $n$ nodes, with expectation $\expec[X_i] = \frac{1}{\mu'}$ and variance $\var[X_i] = \sigma^2$ for all $i$. 

\begin{thm}[Upper Bound with General Service Time]
The mean response time $T_{(n,k)}$ of an $(n, k)$ fork-join system with general service time $X$ such that $\expec[X] = \frac{1}{\mu'}$ and $\var[X] = \sigma^2$ satisfies
\begin{align}
T_{(n,k)} \leq \frac{1}{\mu'} &+ \sigma \sqrt{\frac{k-1}{n-k+1}} \nonumber \\
&+ \frac{\lambda \left[ \left ( \frac{1}{\mu'} + \sigma \sqrt{\frac{k-1}{n-k+1}} \right)^2 + \sigma^2 C(n,k) \right]}{2 \left[ 1- \lambda \left( \frac{1}{\mu'} + \sigma \sqrt{\frac{k-1}{n-k+1}} \right ) \right ]} \label{eqn:upper_bnd_gen}.
\end{align}
\end{thm}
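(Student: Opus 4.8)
The plan is to repeat the split--merge reduction from the proof of Theorem~\ref{thm:upper_bnd}, now allowing a general service distribution. Replacing the $(n,k)$ fork-join system by the $(n,k)$ split--merge system (in which all $n$ nodes are blocked until $k$ of their tasks finish) can only increase the mean response time, and the split--merge system is an M/G/1 queue whose service time $S$ is distributed as the $k^{th}$ order statistic $X_{k,n}$ of $n$ i.i.d.\ copies of $X$. By the Pollaczek--Khinchin formula \eqref{eqn:pollac_khin}, its mean response time equals $\expec[S] + \lambda(\expec[S]^2 + \var[S])/\bigl(2(1-\lambda\expec[S])\bigr)$, which upper bounds $T_{(n,k)}$. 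Unlike in Theorem~\ref{thm:upper_bnd}, we cannot evaluate $\expec[X_{k,n}]$ and $\var[X_{k,n}]$ in closed form as in \eqref{eq.orderstat_mean}--\eqref{eq.orderstat_var}, since only $\expec[X]=1/\mu'$ and $\var[X]=\sigma^2$ are known; the whole point is to bound these two order-statistic moments distribution-free.

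The first ingredient is the classical distribution-free bound $\expec[X_{k,n}] \le \tfrac{1}{\mu'} + \sigma\sqrt{(k-1)/(n-k+1)}$. A self-contained derivation: since $X_{k,n}$ is at most the average of the $m := n-k+1$ largest samples, $m\,(X_{k,n}-\tfrac{1}{\mu'}) \le \sum_{j=k}^{n}(X_{j,n}-\tfrac1{\mu'})$; writing the $\{0,1\}$-indicator of ``among the top $m$'' as $m/n$ plus a zero-sum perturbation $b_i$ with $\sum_i b_i^2 = m(n-m)/n$, then applying Cauchy--Schwarz over $i$ followed by Jensen, bounds $\expec\bigl[\sum_{j=k}^n (X_{j,n}-\tfrac1{\mu'})\bigr]$ by $\sigma\sqrt{m(n-m)(n-1)/n} \le \sigma\sqrt{m(n-m)}$; dividing by $m$ gives the claim with $n-m = k-1$. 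The second ingredient is the analogous distribution-free upper bound $\var[X_{k,n}] \le \sigma^2 C(n,k)$ on the variance of the $k^{th}$ order statistic, with $C(n,k)$ the quantity appearing in \eqref{eqn:upper_bnd_gen}.

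It then remains to substitute. Viewing the Pollaczek--Khinchin expression as $f(t,v) = t + \lambda(t^2+v)/\bigl(2(1-\lambda t)\bigr)$ with $t=\expec[S]$, $v=\var[S]$, one checks $\partial_v f > 0$ and, for $0 < \lambda t < 1$, $\partial_t f = 1 + \lambda\bigl(t(2-\lambda t) + \lambda v\bigr)/\bigl(2(1-\lambda t)^2\bigr) > 0$, so $f$ is increasing in both arguments on the stable region. Hence replacing $\expec[S]$ by $\tfrac1{\mu'} + \sigma\sqrt{(k-1)/(n-k+1)}$ and $\var[S]$ by $\sigma^2 C(n,k)$ only enlarges the bound, yielding \eqref{eqn:upper_bnd_gen}; positivity of its denominator is exactly $\lambda\bigl(\tfrac1{\mu'}+\sigma\sqrt{(k-1)/(n-k+1)}\bigr) < 1$, which implies the M/G/1 stability requirement $\lambda\expec[S] < 1$. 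The main obstacle is the variance bound with a clean explicit $C(n,k)$: the mean bound is a standard distribution-free inequality admitting the short argument above, but the extremal distributions for order-statistic \emph{variances} differ, so the Cauchy--Schwarz trick does not transfer directly and one must invoke (or carefully reprove) a maximal-variance result for order statistics. Everything else is the already-available split--merge reduction together with monotonicity of the Pollaczek--Khinchin formula.
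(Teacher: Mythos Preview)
Your proposal is correct and follows the same approach as the paper: reduce to the split--merge M/G/1 system, apply Pollaczek--Khinchin, and then replace $\expec[S]$ and $\var[S]$ by distribution-free upper bounds, using monotonicity of the formula in both arguments. The only expository differences are that the paper simply cites \cite{ArnoldGroeneveld} and \cite{Papadatos} for the two order-statistic bounds (rather than sketching the Cauchy--Schwarz/Jensen argument for the mean) and merely asserts the monotonicity you verify explicitly; your identification of the variance bound as the one piece that must be invoked rather than rederived matches the paper exactly.
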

\begin{proof}
The proof follows from Theorem~\ref{thm:upper_bnd} where the upper bound can be calculated using $(n, k)$ split-merge system and Pollaczek-Khinchin formula (\ref{eqn:pollac_khin}). Unlike the exponential distribution, we do not have an exact expression for $S$, i.e., the $k^{th}$ order statistic of the service times $X_1, \,\, X_2, \,\, \cdots X_n$. Instead, we use the following upper bounds on the expectation and variance of $S$ derived in \cite{ArnoldGroeneveld} and\cite{Papadatos}.
\begin{align}
\expec[S] &\leq \frac{1}{\mu'} + \sigma \sqrt{\frac{k-1}{n-k+1}}, \label{eqn:upperbound_exp} \\
\var[S] &\leq C(n,k) \sigma^2. \label{eqn:upperbound_var}
\end{align}

The proof of (\ref{eqn:upperbound_exp}) involves Jensen's inequality and Cauchy-Schwarz inequality. For details please refer to \cite{ArnoldGroeneveld}. The constant $C(n,k)$ depends only on $n$ and $k$, and can be found in the table in \cite{Papadatos}. Holding $n$ constant, $C(n,k)$ decreases as $k$ increases.  The proof of (\ref{eqn:upperbound_var}) can be found in \cite{Papadatos}.

Note that (\ref{eqn:pollac_khin}) strictly increases as either $\expec[S]$ or $\var[S]$ increases. Thus, we can substitute the upper bounds in it to obtain the upper bound on mean response time (\ref{eqn:upper_bnd_gen}).
\end{proof}

Regarding the lower bound, we note that our proof in Theorem~\ref{thm:lower_bnd} cannot be extended to this general service time setting. The proof requires memoryless property of the service time, which does not necessary hold in the general service time case.


\subsection{Heavy-tailed Service Time}
\label{subsec:heavy_tailed_service}

In many practical systems the service time has a heavy-tail distribution, which means that there is a larger probability of getting very large values. More formally, a random variable $X$ is said to be heavy-tail distribution if its tail probability is not exponentially bounded and $\lim_{x \rightarrow \infty} e^{\beta  x} \Pr(X>x) = \infty$ for all $\beta > 0$. 
We consider the Pareto distribution which has been widely used to model heavy-tailed jobs in existing literature (see for example \cite{CrovellaBestavros, FaloutsosFaloutsos}). The Pareto distribution is parametrized is parametrized by scale parameter $x_m$ and shape parameter $\alpha$ and its cumulative distribution function is given by,
\begin{equation}
F_X(x) = 
\begin{cases}
1 - \left( \frac{x_m}{x} \right)^{\alpha} & \text{for } x \geq x_m \\
0 & \text{for } x < x_m 
\end{cases}
\end{equation}
A smaller value of $\alpha$ implies a heavier tail. In particular, when $\alpha = \infty$ the service time becomes deterministic and when $\alpha \leq 1$ the service time becomes infinite. In \cite{CrovellaBestavros} Pareto distribution with $\alpha = 1.1$ was reported for the sizes of files requested from websites.


\begin{figure}[t]
\centering
\includegraphics[scale=0.45]{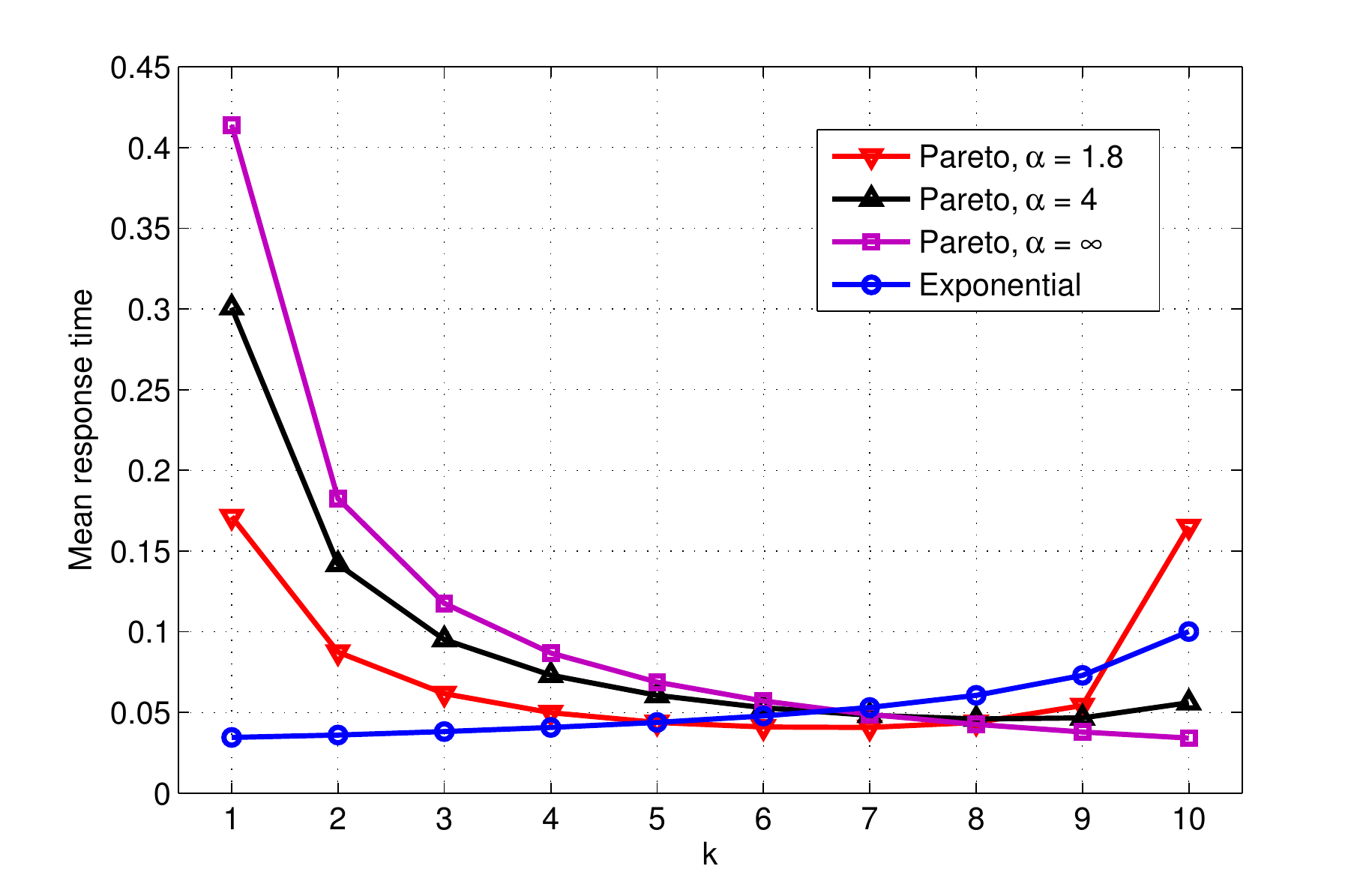}
\caption{Mean response time $T_{(10, k)}$ of different service time distributions. $\lambda = 1$ and $\mu = 3$. For more heavy-tailed (smaller $\alpha$) distributions, the increase in mean response time with $k$ becomes dominant since we have to wait for more nodes to finish service.}
\label{fig.paretonksim}
\end{figure}
In Fig.~\ref{fig.paretonksim} we plot the mean response time $T_{(n,k)}$ versus $k$ for $n=10$ disks, for arrival rate $\lambda = 1$ and service rate $\mu = 3$ for the exponential and Pareto service distributions. Each disk stores $1/k$ units of data and thus the service rate of each individual queue is $\mu' = k \mu$. For a given $k$, all distributions have the same mean service time $1/k\mu$. We observe that as the distribution becomes more heavy-tailed (smaller $\alpha$), waiting for more nodes (larger $k$) to finish results in an increase in mean response time which outweighs the decrease caused by smaller service time $1/k\mu$. For smaller $\alpha$, the optimal $k$ decreases because the increase in mean response time for larger $k$ is more dominant. 

\subsection{Correlated Service Times}
\label{subsec:correlated_service}
Thus far we have considered that the $n$ tasks of a job have independent service times. We now analyze how the correlation between service times affects the mean response time of the fork-join system. In practice the correlation between service times could be because the service time is proportional to the size of the file being downloaded. We model the correlation by considering that the service time of each task is $\delta X_d + (1- \delta) X_{r,i}$, a weighted sum of two independent exponential random variables $X_d$ and $X_{r,i}$ both with mean $1/k\mu$. The variable $X_d$ is fixed across the $n$ queues, and $X_{r,i}$ is the independent for the queues $1 \leq i \leq n$. The weight $\delta$ represents the degree of correlation between the service times of the $n$ queues. When $\delta = 0$, the system is identical to the original $(n,k)$ fork-join system analyzed in Section~\ref{sec:n_k_fork_join}.
The mean response time $T'_{(n,k)}$ of the $(n,k)$ fork-join system with service time distribution as described above is,
\begin{align}
T'_{(n,k)} &= \delta \expec[X_d] + (1-\delta) T_{(n,k)}, \label{eqn:job_dep_serv_time} \\
&= \frac{\delta}{k \mu} + (1-\delta) T_{(n,k)},\nonumber
\end{align}
where in $T_{(n,k)}$ is the response time with independent exponential service times analyzed in Section~\ref{sec:n_k_fork_join}. 
%
Fig.~\ref{fig:job_dep_sim} shows the trade-off between mean response time and $k$ for weight $\delta = 0, 0.5,$ and $1$. When $\delta$ is $0$, coding provides diversity in this regime and gives faster response time for smaller $k$ as we already observed in Fig.~\ref{fig.Tnk_sim}. As the correlation between service times increases we lose the diversity advantage provided by coding and do not get fast response for small $k$. Note that for $\delta = 1$, there is no diversity advantage and the decrease in response time with $k$ is only because of the fact that each disk stores $1/k$ units of data.
%
%
\begin{figure}[hbt]
\centering
\includegraphics[scale=0.45]{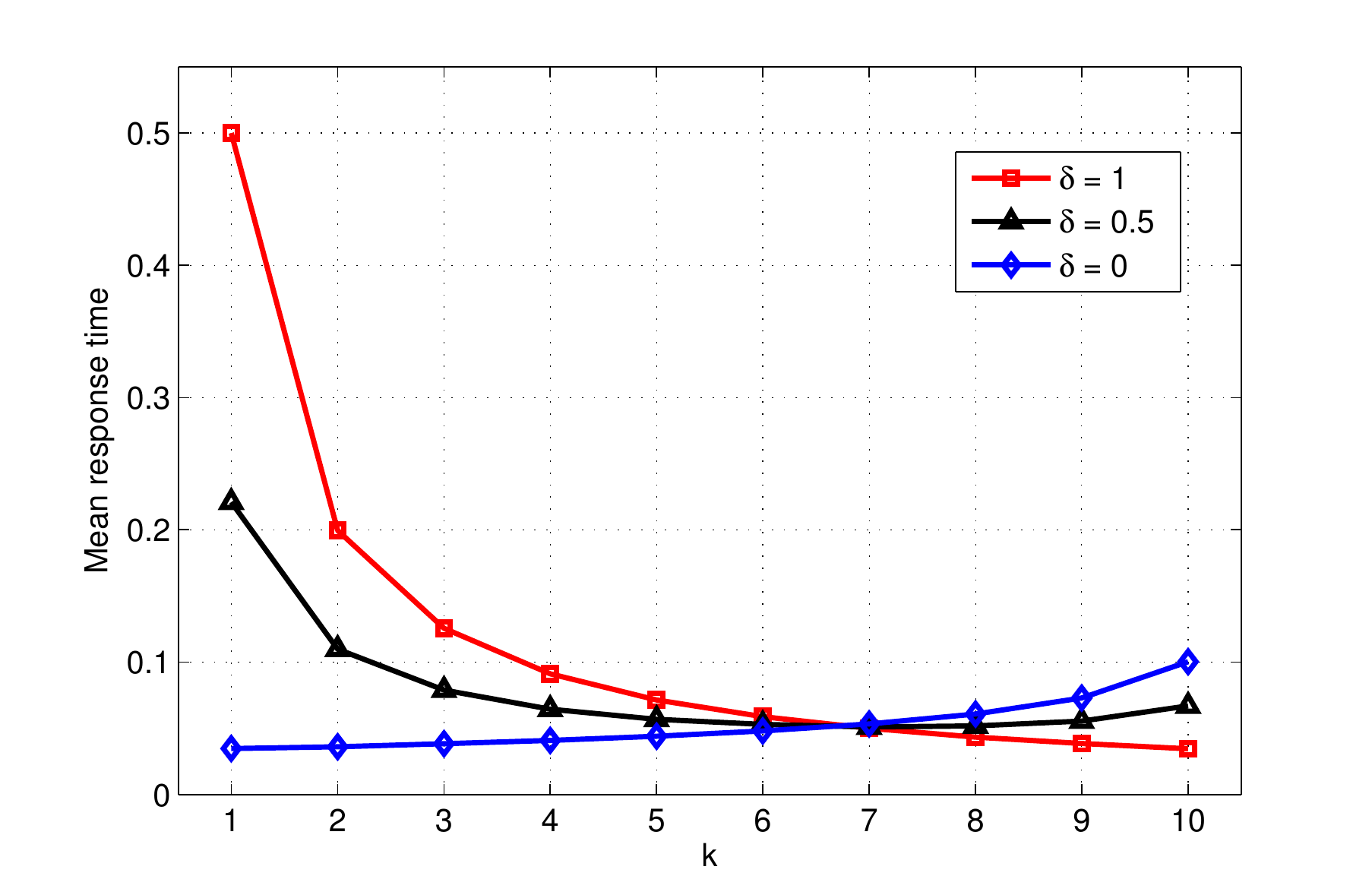}
\caption{ \label{fig:job_dep_sim} Mean response time of $T'_{(n,k)}$ of $(10, k)$ fork-join systems with job dependent service time distribution for $\lambda = 1$, $\mu = 3$, and $\delta = 0, 0.5 ,1$. As $\delta$ increases, the service times are more correlated and we lose the diversity advantage of coding.}
\end{figure}

\section{The $(m,n,k)$ Fork-join System}
\label{sec:m_n_k_fork_join}

In a distributed storage with a large number of disks $m$, having an $(m,k)$ fork-join system would involve large signaling overhead of forking the request to all the $m$ disks, and high decoding complexity. The decoding complexity is high even with small $k$ because it depends on the field size, which is a function of $m$ in standard codes such as Reed-Solomon codes. Hence, we propose a system where we divide the $m$ disks into $g = m/n$ groups of $n$ disks each, which act as independent $(n,k)$ fork-join systems. In Section~\ref{subsec:sys_model_m_n_k} we give the system model and analyze the mean response time of the $(m,n,k)$ fork-join system. In Section~\ref{subsec:num_results_m_n_k} we present numerical results comparing the mean response time with different policies of assigning an incoming request to one of the groups.


\subsection{Analysis of Response Time}
Consider a distributed storage system with $m$ disks. We divide then into $ g = m/n$ groups of $n$ disks each as shown in Fig.~\ref{fig:groups}. We refer to this system as the $(m,n,k)$ fork-join system, formally defined as follows.

\begin{defn}[The $(m,n,k)$ fork-join system]
\label{defn:m_n_k_fork_join}
An $(m,n,k)$ fork-join system consists of $m \ge n$ disks partitioned into $g = m/n$ groups with $n$ disks each. An incoming download request is assigned to one of the $g$ groups according to some policy (e.g., uniformly at random). Each group behaves as an independent $(n,k)$ fork-join system described in Definition~\ref{defn:n_k_fork_join}.
\end{defn}
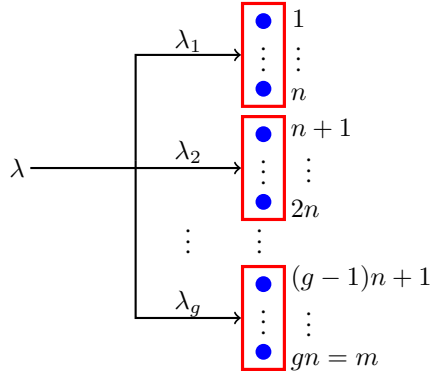
\begin{figure}[ht]
\centering
\begin{tikzpicture}[scale=0.70]
\begin{small}
\matrix [draw=red,very thick,column sep=1cmx, matrix anchor=west] (group1) at (0,0)
{
\node {}; \fill[blue] (0,0) circle (3pt);\\[-1mm]
\node {\vdots};\\
\node {}; \fill[blue] (0,0) circle (3pt);\\
};
[column/.style={anchor=base west}]
\matrix [column sep=1cmx, matrix anchor=west] at (0.6,-0.0)
{
\node {$1$};\\[-1.7mm]
\node {$\vdots$};\\
\node {{$n$}};\\ 
};
\matrix [draw=red, very thick, column sep=1cmx, matrix anchor=west] (group2) at (0,-2.15)
{
\node {}; \fill[blue] (0,0) circle (3pt);\\[-1mm]
\node {\vdots};\\
\node {}; \fill[blue] (0,0) circle (3pt);\\
};
\matrix [column sep=1cmx, column 1/.style={anchor=base west}, matrix anchor=west] at (0.6,-2.15)
{
\node {$n+1$};\\[-1.7mm]
\node {$\;\;\vdots$};\\
\node {$2n$};\\
};
\node at (0.35,-3.4) {$\vdots$};\node at (-1,-3.4) {$\vdots$};
\matrix [draw=red,very thick,column sep=1cmx, matrix anchor=west] (groupg) at (0,-5)
{
\node {}; \fill[blue] (0,0) circle (3pt);\\[-1mm]
\node {$\vdots$};\\
\node {}; \fill[blue] (0,0) circle (3pt);\\
};
\matrix [column sep=1cmx, column 1/.style={anchor=base west}, matrix anchor=west] at (0.6,-5)
{
\node {$(g-1)n+1$};\\[-1.7mm]
\node {$\;\;\vdots$};\\
\node {{$gn=m$}};\\ 
};
\draw [thick,->] (-2,-2) |- (group1.west);
\draw [thick,->] (-4,-2.15) -- (group2.west);
\draw [thick,->] (-2,-2) |- (groupg.west);
\node at (-4.25,-2.15){$\lambda$};
\node at (-1,0.27) {$\lambda_1$};
\node at (-1,-1.85) {$\lambda_2$};
\node at (-1,-4.73) {$\lambda_g$};
\end{small}
\end{tikzpicture}
\caption{ The $(m,n,k)$ fork-join system with incoming service requests split among $g=m/n$ fork-join systems.\label{fig:groups} }
\end{figure}

We can extend Lemma~\ref{defn:n_k_fork_join} to find a necessary condition for the stability of the $(m,n,k)$ fork-join system, in terms of the arrival rate $\lambda_i$ to each group $i$, for $1 \leq i \leq g$.

\begin{lem}[Stability of $(m,n,k)$ fork-join system]
\label{lem:stability_m_n_k}
For the $(m,n,k)$ fork-join system to be stable, the rate of arrival of requests $\lambda_i$ to group $i$ and the service rate $\mu' = k \mu $ per node must satisfy
\begin{align}
\lambda_i &< n\mu, \quad \forall~1 \leq i \leq g.\label{eqn:stability_m_n_k}
\end{align}
\end{lem}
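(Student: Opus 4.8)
The plan is to reduce the claim to Lemma~\ref{lem:stability_n_k} applied separately to each of the $g$ groups. By Definition~\ref{defn:m_n_k_fork_join}, the $m$ disks are partitioned into $g = m/n$ disjoint groups, and a request routed to group $i$ is forked only onto the $n$ disks of that group; the groups share no disks and do not interact once requests have been assigned. Hence the $(m,n,k)$ system is stable if and only if each of its $g$ constituent $(n,k)$ fork-join systems is stable: if even one group's queues grow without bound, so does the total backlog in the system. So it suffices to show $\lambda_i < n\mu$ is necessary for stability of group $i$ in isolation.

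Next I would invoke Lemma~\ref{lem:stability_n_k} on group $i$, which receives requests at rate $\lambda_i$ and has per-node service rate $\mu' = k\mu$. Recall that the proof of Lemma~\ref{lem:stability_n_k} is a rate-balance argument: tasks enter each of the $n$ queues at the group's request rate, a task abandons (when $k$ of its job's $n$ tasks complete) with probability $(n-k)/n$, so the effective task arrival rate at a queue is the request rate times $k/n$, and requiring this to be below $\mu'$ gives exactly ``request rate $< n\mu$''. Applying this with request rate $\lambda_i$ yields $\lambda_i < n\mu$ as a necessary condition for stability of group $i$. Since this must hold for every $i$, we obtain \eqref{eqn:stability_m_n_k}.

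The one point needing a little care is that $\lambda_i$ be a well-defined (long-run time-averaged) arrival rate to group $i$, and that the rate-balance argument of Lemma~\ref{lem:stability_n_k} still applies when the group-assignment policy is not simple Bernoulli splitting, so the per-group arrival stream need not be Poisson. This is not an obstacle: the necessary condition in Lemma~\ref{lem:stability_n_k} depends only on long-run rates and not on the Poisson structure, since in any stable (stationary) regime the time-averaged task arrival rate into each queue cannot exceed its service rate. This observation, together with the independence of the groups noted above, completes the argument.
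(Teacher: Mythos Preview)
Your proposal is correct and follows essentially the same approach as the paper: both reduce the statement to Lemma~\ref{lem:stability_n_k} applied independently to each of the $g$ groups, with $\lambda$ replaced by $\lambda_i$. Your extra remarks on group independence and on the rate-balance argument not requiring Poisson arrivals are reasonable elaborations, but the paper's own proof is just the one-line reduction you describe.
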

\begin{proof}
Since each group behaves as an independent $(n,k)$ fork-join system we can apply the condition of stability in Lemma~\ref{lem:stability_n_k} with $\lambda$ replaced by $\lambda_i$. The result follows from this.
\end{proof}
The response time of the $(m,n,k)$ fork-join system depends on the policy used to assign an incoming request to one of the groups. Under the uniform job assignment policy, each incoming request is assigned to a group chosen uniformly at random from the $g$ groups. The Poisson arrival rate to each group is then reduced to $\lambda/g$, and each group is an independent $(n,k)$ fork-join system. Therefore, we can extend the bounds in Theorem~\ref{thm:upper_bnd} and Theorem~\ref{thm:lower_bnd} to the mean response time of the $(m,n,k)$ fork-join system as follows.
\begin{corr}
\label{corr:bounds_m_n_k}
The response time $T_{(m,n,k)}$ of an $(m,n,k)$ fork-join system with uniform group assignment is bounded by \eqref{eqn:upper_bnd} and \eqref{eqn:lower_bnd} with $\lambda$ replaced by $\lambda/g$.
\end{corr}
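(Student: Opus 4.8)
The plan is to reduce the $(m,n,k)$ fork-join system under uniform group assignment to a collection of $g$ independent, statistically identical $(n,k)$ fork-join systems, and then invoke Theorem~\ref{thm:upper_bnd} and Theorem~\ref{thm:lower_bnd} directly. First I would invoke the splitting property of Poisson processes: if a Poisson arrival stream of rate $\lambda$ is routed so that each arrival is sent, independently of everything else, to one of $g$ destinations chosen uniformly at random, then the substreams seen at the $g$ destinations are themselves mutually independent Poisson processes, each of rate $\lambda/g$. Applying this to Definition~\ref{defn:m_n_k_fork_join} with the uniform assignment policy shows that group $i$ receives a Poisson arrival stream of rate $\lambda_i = \lambda/g$, and --- since the $n$ disks in each group have i.i.d.\ exponential service times and no disk is shared between groups --- the $g$ groups evolve as independent $(n,k)$ fork-join systems in the sense of Definition~\ref{defn:n_k_fork_join}, each with per-node service rate $\mu' = k\mu$ and load factor $\rho'/g = \lambda/(g\mu')$.

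Next I would argue that the mean response time of the whole system coincides with that of a single group. A tagged job's sojourn time is determined entirely by the queueing dynamics of the group it is routed to; it is unaffected by the other $g-1$ groups because they share neither queues nor servers. Since the groups are statistically identical, conditioning on which group the tagged job joins does not change the distribution of its response time, so $T_{(m,n,k)}$ equals the mean response time of an $(n,k)$ fork-join system driven by Poisson arrivals of rate $\lambda/g$.

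Finally, I would apply Theorem~\ref{thm:upper_bnd} and Theorem~\ref{thm:lower_bnd} verbatim to this single $(n,k)$ system, substituting $\lambda \mapsto \lambda/g$ (and correspondingly $\rho' \mapsto \rho'/g$) into \eqref{eqn:upper_bnd} and \eqref{eqn:lower_bnd}, which yields the claimed bounds. There is essentially no obstacle here: the only point requiring a moment's care is the Poisson-splitting step, i.e.\ verifying that i.i.d.\ uniform routing really does produce \emph{independent} Poisson substreams, and noting that stability of each group (Lemma~\ref{lem:stability_m_n_k} with $\lambda_i = \lambda/g$, equivalently $\rho'(H_n - H_{n-k})/g < 1$ for the upper bound) is what makes the substituted expressions meaningful; everything else is a direct appeal to the already-established single-group results.
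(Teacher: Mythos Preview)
Your proposal is correct and follows exactly the paper's own reasoning: the paper states (in the text immediately preceding the corollary, with no separate proof) that under uniform assignment the Poisson arrivals split into independent rate-$\lambda/g$ streams, so each group is an independent $(n,k)$ fork-join system and the bounds of Theorems~\ref{thm:upper_bnd} and~\ref{thm:lower_bnd} apply with $\lambda$ replaced by $\lambda/g$. Your version simply spells out the Poisson-splitting and tagged-job argument more carefully than the paper does.
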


\label{subsec:sys_model_m_n_k}

\subsection{Numerical Results}

\begin{figure}[t]
\centering
\includegraphics[scale=0.45]{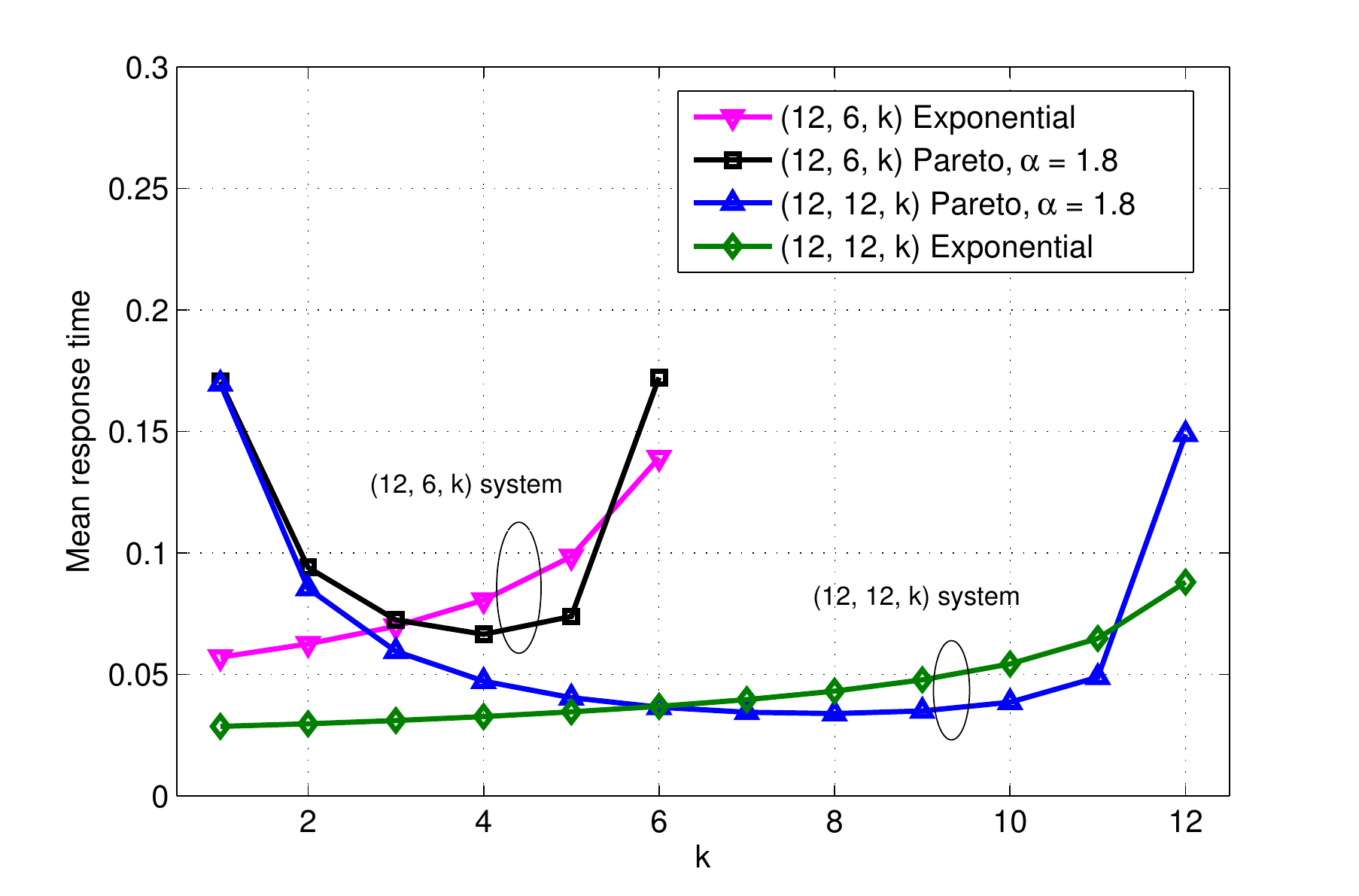}
\caption{Mean response time $T_{(12, n, k)}$ with the exponential and Pareto service distributions, and parameters $\lambda = 1$ and $\mu = 3$. Given $m$ and we would like to find the smallest $n$ and largest $k$ that can achieve a given target response time.}
\label{fig.paretomnksim}
\end{figure}

To reduce the decoding complexity and signaling overhead, an $(m,n,k)$ fork-join system with smaller $n$, and thus more groups $g = m/n$, is preferred. However, reducing $n$ reduces the diversity advantage which could give higher expected download time (cf.\ Fig.~\ref{fig:fixed_storage_cdf}). Thus, there is a delay-complexity trade-off when we vary the number of groups $g$. Moreover, the content has to be replicated at all groups to which its request can be directed. Thus, having a large number of groups, also means increased storage space.

In Fig.~\ref{fig.paretomnksim} we plot the mean response time for $(12, n, k)$ system and uniform group assignment with exponential and Pareto service times. Given the number of disks $m$, we would like to find the smallest $n$, and largest $k$ that can achieve a given target response time. Smaller $n$ means there are less disks per group, and hence less signaling overhead of forking a request to the disks in a group. Larger $k$ is desirable because the total storage space used is $m/k$ units. For exponential service distribution, Fig.~\ref{fig.paretomnksim} shows that diversity of having a large $n$, or smaller $k$ always gives lower response time. But this monotonicity does not hold for the Pareto service time distribution. For example, the $(12,6,3)$ fork-join system with $n=6$ disks per group and $12/3 = 4$ units of storage used, gives lower response time than the $(12,12,2)$ fork-join system with $n=12$ disks per group and total storage $12/2 = 6$ units of storage used. 

We now study the response time of the $(m,n,k)$ fork-join system under three different group assignment policies --  the uniform job assignment policy, where each incoming request is assigned to a group chosen uniformly at random from the $g$ groups, and the power-of-$d$ and least-work-left (LWL) policies introduced in Section~\ref{subsec:power_of_d_intro}. 

In Fig.~\ref{fig:powerof2} we show a comparison of the response time of the $(20,n,k)$ fork-join system with the uniform, power-of-$d$ and LWL group assignment policies. Request arrival are Poisson with rate $\lambda = 1$ and  service times are exponential with rate $\mu = 1/8$. As expected, the power-of-$d$ assignments give lower response time than the uniform assignment but it is at the cost of receiving feedback about the amount of work left at each node. We again note that power-of-2 policy is only slightly worse than the LWL policy (cf.~Fig.~\ref{fig.powerofd_nksim}). The simulation suggests power-of-$d$ group assignment is a strategy worth considering in actual implementations.

\begin{figure}[t]
\centering
\includegraphics[scale=0.45]{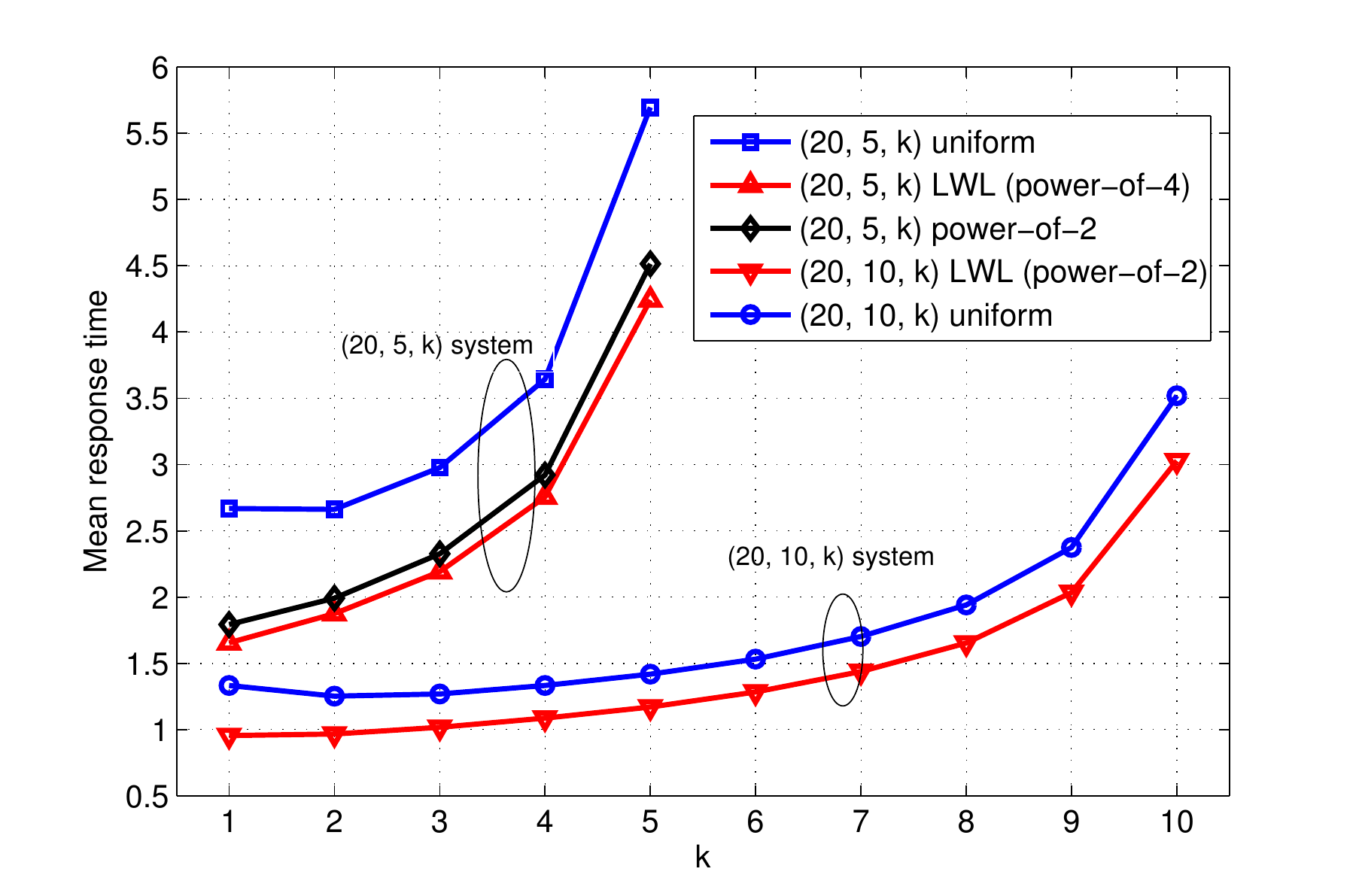}
\caption{ \label{fig:powerof2} Mean response time of $(20, n, k)$ systems, with $\lambda = 1$ and  $\mu = 1/8$ for different group assignment policies. The power-of-$2$ and LWL assignment give faster response time than uniform assignment. }
\end{figure}

\label{subsec:num_results_m_n_k}

\section{Concluding Remarks}
\label{sec:conclu}

\subsection{Major Implications}
In this paper we show how coding in distributed storage systems, which has been used to provide reliability against disk failures, also reduces the content download time. We consider that content is divided into $k$ blocks, and stored on $n> k$ disks or nodes in a network. The redundancy is added using an $(n,k)$ maximum distance separable (MDS) code, which allows content reconstruction by reading any $k$ of the $n$ disks. Since the download time from each disk is random, waiting for only $k$ out of $n$ disks reduces overall download time significantly. 

We take a queueing-theoretic approach to model multiple users requesting the content simultaneously. We propose the $(n,k)$ fork-join system model where each request is forked to queues at the $n$ disks. This is a novel generalization of the $(n,n)$ fork-join system studied in queueing theory literature. We analytically derive upper and lower bounds on the expected download time and show that they are fairly tight. To the best of our knowledge, we are the first to propose the $(n,k)$ fork-join system and find bounds on its mean response time. We also extend this analysis to distributed systems with large number of disks, that can be divided into many $(n,k)$ fork-join systems.

Our results demonstrate the fundamental trade-off between the download time and the amount of storage space. This trade-off can be used for design of the amount of redundancy required to meet the delay constraints of content delivery. We observe that the optimal operating point varies with the service distribution of the time to read each disk. We present theoretical results for the exponential distribution, and simulation results for the heavy-tailed Pareto distribution. 

\subsection{Future Perspectives}
Although, we focus on distributed storage here, the results in this paper can be extended to computing systems such as MapReduce \cite{MapReduce} as well as content access networks \cite{VulimiriMichel, Maxemchuk93}. 

There are some practical issues affecting the download time that are not considered in this paper and could be addressed in future work. For instance, the signaling overhead of forking the request to $n$ disks, and the complexity of decoding the content increases with $n$. In practical storage systems, adding redundancy in storage not only requires extra capital investment in storage and networking but also consumes more energy \cite{BostoenMullender}. It would be interesting to study the fundamental trade-off between power consumption and quality-of-service. Finally, note that in this paper we focus on the \emph{read} operation in a storage system. However in practical systems requests entering the system consist of both \emph{read} and \emph{write} operations -- we leave the investigation of the \emph{write} operation for future work.

\bibliographystyle{IEEEtran}
\bibliography{fork_join}

\end{document}